\newcommand{\R}{\mathbb{R}}
\newcommand{\E}{\mathcal{E}}    
\renewcommand{\S}{\mathbb{S}}    
\newcommand\ie{\emph{i.e.}}
\newcommand{\sign}{{\rm sign}}
\newtheorem{example}{Example}
\newtheorem{remark}{Remark}
\newtheorem{assumption}{Assumption}
\newtheorem{definition}{Definition}
\newtheorem{theorem}{Theorem}
\newtheorem{property}{Property}
\newtheorem{lemma}{Lemma}
\newtheorem{problem}{Problem}
\newtheorem{corollary}{Corollary}
\newtheorem{proposition}{Proposition}
\newenvironment{manualassumption}[1]{%
  \manualassumptioninner
}{\endmanualassumptioninner}
\def\QEDopen{{\setlength{\fboxsep}{0pt}\setlength{\fboxrule}{0.2pt}\fbox{\rule[0pt]{0pt}{1.3ex}\rule[0pt]{1.3ex}{0pt}}}}
\def\QED{\QEDopen}
\def\be{\begin{equation}}
\def\ee{\end{equation}}
\def\ba{\begin{array}}
\def\ea{\end{array}}
\def\eqa{\begin{eqnarray}}
\def\eqe{\end{eqnarray}}
\begin{document}

\title{\LARGE \bf Stochastic Stability of Discrete-time Phase-coupled Oscillators\\ over Uncertain and Random Networks}
\author{Matin Jafarian,~\IEEEmembership{Member,~IEEE},
 Mohammad H. Mamduhi,~\IEEEmembership{Senior Member,~IEEE},\\
 and Karl H. Johansson,~\IEEEmembership{Fellow,~IEEE}
\thanks{M. Jafarian is with the Delft Center for Systems and Control, Delft University of Technology, The Netherlands. Email: {\tt\small m.jafarian@tudelft.nl.} Her work was supported by the Horizon 2020 Marie-Curie Fellowship, Project ReWoMen.}
\thanks{M. H. Mamduhi is with the Automatic Control Laboratory, ETH Z\"urich, Switzerland. Email:{\tt\small mmamduhi@ethz.ch.}}
\thanks{K. H. Johansson is with the Division of Decision and Control Systems, School of Electrical Engineering and Computer Science, KTH Royal Institute of Technology, Stockholm, Sweden. He is also affiliated with Digital Futures. Email: {\tt\small kallej@kth.se.} His work is supported by the Knut and Alice Wallenberg Foundation and the Swedish Research Council.}}

\maketitle
\begin{abstract}
This article studies stochastic relative phase stability, \ie,\ stochastic phase-cohesiveness, of discrete-time phase-coupled oscillators. Stochastic phase-cohesiveness in two types of networks is studied. First, we consider oscillators coupled with $2\pi$-periodic odd functions over underlying undirected graphs subject to both multiplicative and additive stochastic uncertainties. We prove stochastic phase-cohesiveness of the network with respect to two specific, namely in-phase and anti-phase, sets by deriving sufficient coupling conditions. We show the dependency of these conditions on the size of the mean values of additive and multiplicative uncertainties, as well as the sign of the mean values of multiplicative uncertainties. Furthermore, we discuss the results under a relaxation of the odd property of the coupling function. Second, we study an uncertain network in which the multiplicative uncertainties are governed by the Bernoulli process representing the well-known Erd\"{o}s-R\'{e}nyi network. We assume constant exogenous frequencies and derive sufficient conditions for achieving both stochastic phase-cohesive and phase-locked solutions, \ie, stochastic phase-cohesiveness with respect to the origin. For the latter case, where identical exogenous frequencies are assumed, we prove that any positive probability of connectivity leads to phase-locking. Thorough analyses are provided, and insights obtained from stochastic analysis are discussed, along with numerical simulations to validate the analytical results.
\end{abstract}
\begin{IEEEkeywords}
Synchronization of coupled oscillators, Markov processes, Nonlinear systems, Stochastic systems.
\end{IEEEkeywords}
\section{Introduction}\label{sec:int}
\noindent Oscillatory behavior is a fundamental feature of complex networks in a wide range of applications from mechanical and electrical power networks to biological and neuronal networks \cite{izhikevich2007dynamical,sepulchre2006oscillators,strogatz2000kuramoto,jafarian2019synchronization,nozari2019oscillations,villani2020analysis}. Synchronization is useful for achieving stable oscillations in these networks \cite{sepulchre2006oscillators}. Uncertainties and randomness influence the ability of a network to synchronize. A prominent example is a network of biological neurons whose activities govern the underlying mechanisms of cognition \cite{izhikevich2007dynamical} and motion \cite{ijspeert2013dynamical}. In such neuronal networks, the coupling and the input current of neurons are often subject to fluctuations due to various stochastic uncertainties \cite{borgers2003synchronization,deco2009key,richardson2005synaptic,white2000channel}. Therefore, studying the stability of stochastic uncertain oscillatory models is essential. 

Analyzing the behavior of coupled oscillators is hard even in the absence of uncertainties, due to the nonlinear nature of their dynamics. One approach to reduce the complexity of analyses is to map the oscillators' models to their phase dynamics. Such a transformation is feasible under particular conditions \cite{khalil2002nonlinear,izhikevich2006weakly,sacre2014sensitivity}. The framework has been employed in studying synchronization of some classes of nonlinear coupled oscillators, for instance, van der Pol oscillators, and weakly coupled neurons~\cite{brown2003globally,izhikevich2007dynamical,izhikevich1999weakly}. Among notions of synchronization is phase-cohesiveness, a principal desired behavior of oscillatory networks implying boundedness of relative phases~\cite{dorfler2014synchronization}. Phase-cohesiveness extends the notion of phase-locking (also called phase-synchronization), and is useful in achieving frequency synchronization~\cite{dorfler2014synchronization}. Despite the importance of stochastic synchronization, thus far, phase-cohesiveness in stochastic models has not been studied in the literature. Our objective is to address this gap by studying the effects of stochastic uncertainties on phase-coupled oscillators.

In phase-coupled models, oscillators are interconnected via nonlinear periodic functions. The most studied model of phase-coupled oscillators is the well-known Kuramoto model~\cite{strogatz2000kuramoto}, in which the coupling law is reduced to its leading term in an odd Fourier expansion, \ie, sine function. The continuous-time deterministic Kuramoto model has been widely employed to study synchronization in various applications~\cite{jadbabaie2004stability,delabays2019kuramoto,dorfler2014synchronization,franci2011existence,jafarian2018sync,menara2019stability}. Besides, the importance of more general classes of coupling functions on synchronization has also been shown in the literature (see~\cite{okuda1993variety,watanabe1997stability,brown2003globally,nishikawa2004oscillatory}, and references therein).  

We study the effects of stochastic multiplicative and additive uncertainties in a network of phase oscillators coupled via $2\pi$-periodic odd functions. Odd coupling functions generalize the Kuramoto model, by allowing the  addition of some higher Fourier harmonics (e.g. phase model of LC neurons~\cite{brown2003globally}), and present a gradient dynamical system, which has been proven useful in achieving convergence in oscillatory networks~\cite{acebron2005kuramoto,izhikevich1999weakly,mallada2013synchronization}. For instance, phase oscillators coupled via odd functions have been used in studying the associate memory feature of integrate-and-fire spiking neuronal networks~\cite{izhikevich1999weakly}, as well as recall of oscillatory associative memory networks~\cite{nishikawa2004oscillatory}. The combination of generality, applicability and mathematical amenability inspire us to primarily focus on this class of coupling functions. Furthermore, we discuss an extension of our results by allowing a relaxation of the odd property of the coupling function.

This article studies stochastic set stability for phase oscillators described above, using a discrete-time approximation, and perform the analyses within the framework of Markov chain stability \cite{kushner1990numerical,meyn2012markov}. 
\subsection{Related works}
The effects of uncertainties on the synchronization of oscillatory networks have mainly been investigated in continuous-time deterministic perturbed models, and stochastic models with only additive noise. In the stochastic setting, the effects of additive noise, such as uncertain exogenous frequencies, have been studied for some classes of continuous-time nonlinear oscillators using a phase reduction model \cite{teramae2004robustness}, and a Fokker-Planck model \cite{acebron2005kuramoto,bag2007influence}. Moreover, synchronization of continuous-time Kuramoto oscillators with additive uncertainties, modeled as a Wiener process, has been analyzed in a stochastic game setting \cite{yin2011synchronization}. A mean-filed approach to frequency synchronization of continuous-time coupled Kuramoto oscillators subject to random exogenous frequencies has been studied in \cite{ichinomiya2004frequency}. Oscillators in Erd\"{o}s-R\'{e}nyi \cite{random,hatano2005agreement} random networks, in which each two oscillators are decoupled with a non-zero probability, have also been studied \cite{manaffam2013synchronization,preciado2005synchronization,porfiri2008synchronization,
zhou2006universality}. All aforementioned studies have focused on the asymptotic full-state synchronization of continuous-time oscillators in the Euclidean space using a linear approximation of the network dynamics in the vicinity of the synchronous state. 

The effects of coupling noise have been studied on the full-state synchronization of a network of continuous-time harmonic oscillators \cite{wang2018stochastic}, and interconnected oscillators with a common or zero intrinsic noise together with state-dependent coupling noise~\cite{aminzare2022stochastic,russo2018synchronization}. These works have analyzed the network behavior using stochastic differential equations and discussed the effects of noise on the synchronization.
 
Deterministic perturbed Kuramoto model has been studied considering time-varying exogenous frequencies and coupling coefficients \cite{franci2010phase,lu2018stability,zhu2020synchronization}. The local input-output stability of the exact synchronization (phase-locking) solution has been proven imposing an upper-bound on the time-varying exogenous frequencies \cite{franci2010phase}. Moreover, phase-cohesiveness of the in-phase solution in a network with time-varying couplings and exogenous frequencies \cite{lu2018stability} has been shown imposing conditions on the time-evolution of the coupling coefficients. Compared with the literature, our contributions, detailed below, include: studying stochastic phase-coupled oscillators; generalizing the coupling function; considering stochastic multiplicative and additive uncertainties (see Remark \ref{re}); and studying the stability of phase-coupled oscillators in random networks. It is worth noting that deterministic unperturbed discrete-time Kuramoto oscillators, with applications in communication and robotics networks, have also been studied \cite{favaretto2017cluster,klein2008integration}. Our construction of Lyapunov function for the general connected and undirected networks, extends the analysis of the aforementioned works, if we replace the stochastic variables with the deterministic ones.
\subsection{Main contributions}
Our objective is to study the effects of multiplicative, \ie, system's states or a function of them multiplied by stochastic random variables \cite{willems1976feedback}, and additive uncertainties on the stability of relative phases of discrete-time phase-coupled oscillators. Our main contributions are highlighted below.\\[1mm]
First, we introduce the new notions of stochastic phase-cohesiveness and ultimate stochastic phase-cohesiveness. The notion of phase-cohesiveness for deterministic models is defined based on the concept of invariant sets \cite{dorfler2014synchronization}. For stochastic models, we use the concept of Harris recurrent Markov chains. Basically, a network of stochastic phase-coupled oscillators is phase-cohesive with respect to a desired set if the probability that the relative phases return to this set, after leaving it, is one (see Definition~\ref{def1}). Ultimate stochastic phase-cohesiveness indicates a bounded return time (see Definition~\ref{def2}).\\[1mm]
Second, we study the phase-cohesiveness of discrete-time phase oscillators over an undirected graph subject to both multiplicative (\ie, coupling weights) and additive (\ie, exogenous frequencies) stochastic uncertainties. The presence of multiplicative uncertainties indicates that the underlying topology is connected in a probabilistic sense. We consider $2\pi$-periodic, bounded, and odd coupling laws, and independent and identically distributed (i.i.d) random uncertainties obeying normal distributions. In Theorem \ref{th1}, we assume either strictly positive or negative mean values for the uncertain couplings. By obtaining sufficient conditions in the form of lower bounds on the common coupling coefficient, we prove that depending on the sign of the mean values of multiplicative uncertainties, the uncertain network achieves stochastic phase-cohesiveness with respect to either in-phase or anti-phase sets. These sets are defined in the vicinity of the roots of the coupling function, \ie, zero (in-phase) and $\pi$ (anti-phase). We also characterize the conditions for achieving ultimate phase-cohesiveness (see Corollary~\ref{th2}). Respecting the discrete-time setting, our results are derived assuming a sufficiently small sampling-time for which we also characterize an upper bound.

In Proposition~\ref{co:clus}, we then allow the co-existence of edges whose corresponding multiplicative uncertainties possess either negative or positive mean values. Assuming identical exogenous frequencies, we study this special case over an underlying line topology. Conditions under which the network exhibits clustering behavior are derived by showing its stochastic phase-cohesiveness with respect to the union of the in-phase and anti-phase sets.\\[1mm]
Third, we discuss a relaxation of the odd property of the coupling function. We allow the coupling function to be non-odd only on a subset of its domain. This relaxation describes a more general coupling law which is applicable, for instance, in studying series arrays of Josephson junctions~\cite{watanabe1997stability}. For oscillators over an undirected tree network with multiplicative (positive mean-values) and additive uncertainties, conditions under which the network is stochastic phase-cohesive with respect to the in-phase set are characterized (see Proposition~\ref{co:sh}).\\[1mm]
Fourth, we study phase-coupled oscillators, with $2\pi$-periodic odd couplings, such that each two coupled oscillators are connected using a common and constant coupling term and with a non-zero probability of connection, which represents an Erd\"{o}s-R\'{e}nyi random network. We obtain conditions for achieving both stochastic phase-cohesive and phase-locked solutions for this network. For the case of constant and non-identical frequencies, we show that the effect of the randomness is to stabilize the in-phase set as the only absorbing set of the network (see Theorem~\ref{th3}). Furthermore, we prove that any positive probability of connection will lead to a phase-locking solution when all oscillators have identical exogenous frequencies (see Corollary~\ref{th4}). 

In Remark \ref{re}, stochastic analysis is compared with deterministic perturbation analysis. To the best of our knowledge, the stochastic stability of phase-coupled oscillators has not been studied in the literature. A preliminary result for Kuramoto oscillators over a tree network was presented in our conference paper \cite{jafarian2019stochastic}.
\subsection{Outline}
The rest of this article is organized as follows. Section \ref{sec:pre} provides the model and the required preliminaries. Section \ref{sec:pf} presents the problem formulations, and the notions of stochastic phase-cohesiveness, and ultimate stochastic phase-cohesiveness. The analyses of uncertain and random networks are presented in Sections \ref{sec:noise} and \ref{sec:rand}, respectively. Section \ref{sec:sim} presents simulation results and Section \ref{sec:cl} summarizes the concluding remarks. All proofs are provided in the appendices.\\[1mm]
\noindent{\bf{Notation}:}
The notation $x_{i,j}$ is equivalently used for $x_i-x_j$. A random variable $x$ selected from an arbitrary distribution $\mathcal{X}$ with mean $\mu$ and variance $\sigma^2$ is denoted by $x \sim \mathcal{X} (\mu, \sigma^2)$. The expected value and conditional expected value operators are denoted by $\boldsymbol E[\cdot]$ and $\boldsymbol E[\cdot|\cdot]$, respectively. The symbol $\S^1$ denotes the unit circle. An angle is a point in $\S^1$ and an arc is a connected subset of $\S^1$. Given a matrix $M$ of real numbers, we denote by ${\cal R} (M)$ and ${\cal N} (M)$ the range and the null space, respectively. Symbol $\mathbf{1}_n$ is a $n$-dimensional vector of all ones. The empty set is denoted by $\emptyset$. 
\section{Model and Preliminaries}\label{sec:pre} 
This article considers a network of discrete-time phase-coupled oscillators governed by dynamics in \eqref{eq:p1}, which is a discrete-time approximation of its continuous-time counterpart
\be\begin{aligned}\label{eq:p1}
{\theta_i}(\mathrm{k}+1)=&\Big\{{\theta_i}(\mathrm{k}) + \tau{\tilde\omega}_i(\mathrm{k})- \\ & \Big(\kappa\tau \sum_{j \in {\cal N}_i} {\tilde\alpha}_{i,j}(\mathrm{k}) \Psi(\theta_{i,j}(\mathrm{k}))\Big)\Big\}\hspace{-3mm}\pmod{2\pi},
\end{aligned}\ee
where $\mathrm{k}\in \mathbb{Z}^+$, $\theta_i(\mathrm{k}) \in \S^1$, $\theta_{i,j}(\mathrm{k})$, $\tau >0$, and $\kappa >0$ represent the time step, the phase of oscillator $i$, the relative phase of oscillators $i$ and $j$, the sampling time, and the common coupling term, respectively. The set of neighbors of oscillator~$i$ is denoted by ${\cal N}_i$. The variables ${\tilde\alpha}_{i,j}(\mathrm{k})$ and ${\tilde\omega}_i(\mathrm{k})$ are stochastic variables representing the uncertain multiplicative coupling weight, and the uncertain exogenous frequency, respectively. Function $\Psi(\cdot)$ is the coupling function. For the purpose of brevity, the term $\pmod{2\pi}$ is omitted from all representations of the discrete-time dynamics in the rest of the article. 

\begin{assumption}\label{as1}
Function $\Psi(\cdot)$~is $2 \pi$-periodic, continuously differentiable, and odd\footnote{The relaxation of this assumption is discussed in Section \ref{sec:noise}.}. In the interval $[0,\pi]$, it holds that $\Psi(0)=\Psi(\pi)=0$, and $|\Psi(\xi)| \neq 0, \forall \xi \in (0,\pi)$. 
\end{assumption}
Since $\Psi(\cdot)$ is continuously differentiable, hence bounded, \ie, $|\Psi(\xi)| \leq \Psi_{\max}$, the following property immediately follows. This property indicates that there exits an arc $\Upsilon \subset [0,\pi]$ such that the value of $\Psi(\cdot)$ for every angle on this arc is greater than angles which belong to arcs in the vicinity of $0$ and $\pi$.
\begin{property}\label{prop1}
For any function $\Psi(\cdot)$ satisfying Assumption \ref{as1}, there exists an arc $\Upsilon=(\gamma,\gamma_{\max}) \subset (0,\pi)$, such that $\forall \xi, z \in [0,\pi]$, if $\xi \not\in \Upsilon, z \in \Upsilon$ then $|\Psi(\xi)| \leq |\Psi(z)|$. 
\end{property}

\subsection{Preliminaries}
The term $|\theta_i(\mathrm{k})-\theta_j(\mathrm{k})|$ denotes the {\em geodesic distance} between phases $\theta_i, \theta_j \in \S^1$. The geodesic distance is the minimum value between the counter-clockwise and the clockwise arc lengths connecting $\theta_i$ and $\theta_j$. The size of the relative phase $\theta_{i,j}= \theta_i-\theta_j \in (-\pi,\pi]$ equals $|\theta_i-\theta_j|$ and its sign is positive if the counter-clockwise path length from $\theta_i$ to $\theta_j$ is smaller than that of the clockwise path. The relative phase of the two oscillators decreases if 
$|{\theta_{i,j}}(\mathrm{k}+1)| < |{\theta_{i,j}}(\mathrm{k})|.$ 
\subsection*{Graph theory} We here revisit some preliminaries on graph theory mainly borrowed from \cite{mesbahi2010graph}. Consider an undirected graph $G(\mathcal V,\mathcal E)$, where $\mathcal V$ is the set of $n$ nodes and $\mathcal E \subset \mathcal V \times \mathcal V$ is the set of $m$ edges. The graph's incidence matrix is denoted by $B_{n \times m }$. The two matrices $L(G) \triangleq B B^T$ and $L_e(G)\triangleq B^T B$ are called the graph Laplacian and the edge Laplacian, respectively. If the underlying graph is connected, then the eigenvalues of $L$ can be ordered as $0=\lambda_1(L) < \lambda_2(L)\leq ... \leq \lambda_n(L)$, where $\lambda_2(L)$ is called the {\it algebraic connectivity} of the graph. Moreover, all non-zero eigenvalues of $L_e$ are equal to the non-zero eigenvalues of $L$. 

A spanning tree of $G$ is a subgraph $G_\tau({\mathcal V},{\mathcal E}_{\tau})$ which is a tree (cycle free) graph. Under an appropriate permutation of the edge indices, the incidence matrix of a connected graph $G$ can be partitioned as $B=[B(G_\tau) \quad B(G_c)]$, where $G_\tau$ represents a given spanning tree of $G$, and $G_c$ represents the remaining edges. There exists a matrix $R$ such that $L_e(G)= R^\top L_e(G_\tau) R$, where $R=[I \quad T]$ with $T=L^{-1}_e(G) B^{\top}(G_\tau) B(G_c)$. For tree graphs, the edge Laplacian is positive definite.

\subsection*{Markov chains} 
A general measurable space is a pair $(X,\mathcal{B}(X))$ with $X$ a set of points and $\mathcal{B}(X)$ a $\sigma$-algebra of subsets of $X$ satisfying the following properties:
\begin{itemize}
\item[(a)] $\emptyset \in \mathcal{B}(X)$,
\item[(b)] If $D \in \mathcal{B}(X)$, then $D^c \in \mathcal{B}(X)$, where $D^c=X\setminus D$,
\item[(c)] If $D_1\in \mathcal{B}(X)$ and $D_2\in \mathcal{B}(X)$, then $D_1\cup D_2 \in \mathcal{B}(X)$. 
\end{itemize} 

A Markov chain is a stochastic process $\Phi=\{\Phi_0, \Phi_1, \ldots\}$ such that each $\Phi_i$ is randomly taking values on the measurable state-space $X$ which is endowed by the $\sigma$-algebra $\mathcal{B}(X)$. The chain $\Phi$ is defined by the triple $(\Omega, \mathcal{F}, \textsf{P})$, such that:
\begin{itemize}
\item $\Omega$ is the whole state-space, \ie, the product of all pairs, each of which corresponds to $\Phi_i$ and is a subset of $(X,\mathcal{B}(X))$,
\item $\mathcal{F}$ is a $\sigma$-algebra associated to the measurable space $\Omega$, 
\item $\mathsf{P}\!:\!\mathcal{F}\!\rightarrow \![0,1]$ is a probability measure defined on $(\Omega, \mathcal{F})$ that assigns a probability to each outcome of~$\mathcal{F}$.
\end{itemize} 
\vspace{2mm}
\noindent The following definitions are mainly borrowed from \cite{meyn2012markov}. 

\begin{definition}\cite[Ch.3]{meyn2012markov}\label{def:cont}
Let $(X,\mathcal{B}(X))$ be a measurable space. The state-space $X$ is called \textit{countable} if $X$ is discrete, with a countable number of elements, and with $\mathcal{B}(X)$ the $\sigma$-algebra of all subsets of $X$. The state-space $X$ is called \textit{general} if it is assigned a countably \textit{generated} $\sigma$-algebra $\mathcal{B}(X)$.\footnote{The smallest $\sigma$-algebra on which $\mathcal{B}$ is measurable, i.e., the intersection of all $\sigma$-algebras on which $\mathcal{B}$ is measurable, is called the generated $\sigma$-algebra by $\mathcal{B}$.} 
\end{definition}

\begin{definition}\label{def:markov}
For a stochastic process $\Phi\!=\!\{\Phi_0, \Phi_1, \ldots\}$ defined on $(\Omega, \mathcal{F}, \textsf{P})$, let $P^n(\omega,\mathcal{B})$ denotes the transition probability that the process enters the set $\mathcal{B}$ after $n$ transitions, i.e., $\Phi_{n+m}\!\in \!\mathcal{B}$, given $\Phi_m \!=\! \omega$. Then $\Phi$ is a \textit{time-homogeneous} Markov chain if transition probabilities $\{P^n(\omega,\mathcal{B}), \omega\in \Omega, \mathcal{B}\subset \Omega\}$ exist such that for any $n,m\in \mathbb{Z}^+$, the following holds:
\begin{equation*}
\mathsf{P}(\Phi_{n+m}\in \mathcal{B}|\Phi_j,j\leq m, \Phi_m=\omega)=P^n(\omega,\mathcal{B}).
\end{equation*}
The independence of the transition probability $P^n(\omega,\mathcal{B})$ from $j\leq m$ entails the Markov property, and its independence from $m$ confirms the time-homogeneity.
\end{definition}

\begin{definition}\label{def:irreducibility}
Let $\Phi\!=\!\{\Phi_0,\Phi_1,\ldots\}$ be a Markov chain defined on $(\Omega, \mathcal{F}, \textsf{P})$. Then:
\begin{enumerate}
\item for any $D\in \mathcal{F}$, the measurable function $\tau_D: \Omega \rightarrow \mathbb{Z}^+\cup \{\infty\}$ is the \textit{first return time} to the set $D$, i.e.,
\begin{equation}\label{eq:ret_time}
\tau_D \triangleq \min \{n\geq 1\;|\;\Phi_n\in D\},
\end{equation}
\item for any measure $\varphi$ on the $\sigma$-algebra $\mathcal{F}$, the Markov chain $\Phi$ is said to be \textit{$\varphi$-irreducible} if $\forall \omega \in \Omega$ and $D\in \mathcal{F}$, $\varphi(D)>0$ implies
$\mathsf{P}(\tau_D<\infty)>0$.
\end{enumerate}
\end{definition}
According to Definition \ref{def:irreducibility}, the entire state-space of a Markov chain is reachable, independent of the initial state, via finite number of transitions if the Markov chain is $\varphi$-irreducible. Moreover, in that case, a unique maximal irreducibility measure $\psi > \varphi$ exists on $\mathcal{F}$ such that $\Phi$ is $\varphi^\prime$-irreducible for any other measure $\varphi^\prime$ if and only if $\psi > \varphi^\prime$. We then say that the Markov chain is $\psi$-irreducible.
\begin{definition}\cite[Ch.5]{meyn1994}
Let $a=\{a(n)\}$ be a probability measure on $\mathbb{Z}^+$ and $\Phi_a$ be the sampled chain of the Markov chain $\Phi$ at time-points drawn successively according to the distribution~$a$. Denote the probability transition kernel of $\Phi_a$ by $K_a(\omega,D)=\sum_{n=0}^{\infty} K^n(\omega,D)\;a(n)$, where $K(\omega,D)$ and $K^n(\omega,D)$ are the probability transition kernel of $\Phi$, and the $n$-step probability transition kernel of $\Phi$, respectively. A set $C\in \mathcal{B}$ is a $\nu_a$-petite set, where $\nu_a$ is a non-trivial measure on $\mathcal{B}$, if for all $\omega\in C$ and $D\in \mathcal{B}$, $\Phi_a$ satisfies 
\begin{equation*}
K_a(\omega,D)\geq \nu_a(D).
\end{equation*}
\label{def:petite_set}
\end{definition}
\begin{definition}\cite[Ch. 6]{meyn1994}
Let $K$ be the transition probability kernel of a chain $\Phi$ defined on a locally compact and separable space $X$ acting on a bounded function $h(x): X\rightarrow \mathbb{R}, x\in X$ via the mapping $K h(x)=\int K(x,dy)h(y)$. Denoting the class of bounded continuous functions from $X$ to $\mathbb{R}$ by $\mathcal{C}(X)$, the chain $\Phi$ has the Feller property if $K$ maps $\mathcal{C}(X)$ to $\mathcal{C}(X)$.
\label{def:feller1}
\end{definition}
\begin{table}
\begin{center}
   \caption{Most used variables and notations}
\begin{tabular}{c|l}
\hline
  $\theta_i(\mathrm{k})$  & \text{Phase of oscillator} $i$ {at time-step} $\mathrm{k}$\\[0.5mm]
  $\theta_{i,j}(\mathrm{k})$ & Relative phase, \ie, $\theta_i(\mathrm{k})-\theta_j(\mathrm{k})$\\[0.5mm]
   $\boldsymbol\theta(\mathrm{k})$ & Augmented phase vector \\[0.5mm]
   $\Theta(\mathrm{k})$ & Augmented relative phase vector \\[0.5mm]
   $\S^1$ & Unit circle\\[0.5mm]
    $\kappa$ & Common coupling strength\\[0.5mm]
    $\tau$ & Common sampling time\\[0.5mm]
    $\Psi$ & Periodic odd coupling function\\[0.5mm]
    $\tilde{\alpha}_{i,j}(\mathrm{k})$ & Multiplicative uncertainty of edge $(i,j)$\\[0.5mm] 
    $\tilde{\omega}_{i}(\mathrm{k})$ & Uncertain exogenous freq. of oscillator $i$\\[0.5mm] 
    $\mathcal{E}$ & Edge set of the underlying network graph\\[0.5mm]
    $\mathcal{N}_i$ & Set of neighbors of oscillator $i$\\[0.5mm]
    $\Pi$ & State space of augmented relative phases\\[0.5mm]
    $\gamma$ & Phase angle on the arc $[0,\pi]$\\[0.5mm]
    $S^{\Psi}_{G}(\gamma)$ & In-phase set \\[0.5mm]
    $\tau_{S^{\Psi}_{G}(\gamma)}$ & First return time to in-phase set \\[1mm]
    $S^{\Psi}_{G}(\gamma_{\max})$ & Anti-phase set \\[1mm]
    \hline
\end{tabular}
\end{center}
\end{table}
\section{Problem formulation}\label{sec:pf}
This section presents definition of the stochastic phase-cohesiveness, and presents the problem statement. 
\subsection{Stochastic phase-cohesiveness}
This section introduces two new definitions for stochastic stability of phase-coupled oscillators based on two notions of stability of stochastic processes. Our first definition, stochastic phase-cohesiveness, corresponds to the concept of Harris recurrent Markov chains \cite[Ch. 9]{meyn2012markov}. A $\psi$-irreducible chain with state space $X$ is Harris recurrent if it visits every set $A \in {\mathcal B}^{+}(X)$ almost infinitely, where ${\mathcal B}^{+}(X)\triangleq \{A\in \mathcal{B}(X): \psi(A)>0\}$ and $\psi$ is the maximal irreducibility measure.\footnote{The set ${\mathcal B}^{+}(X)$ contains all sets of positive $\psi$-measure subsets of ${\mathcal B}(X)$ and is uniquely defined for $\psi$-irreducible chains \cite[Ch.4]{meyn2012markov}.} Equivalently, a chain is Harris recurrent if the probability of its first return time $\tau_A$ to a desired set $A$ is one \cite{meyn2012markov}.

Let $\Theta$ denote the augmented relative phase vector of the interconnected oscillators with the dynamics in \eqref{eq:p1}: 
\be\label{eq:T} \Theta(\mathrm{k})={\mathcal H}^{G} (\boldsymbol\theta(\mathrm{k})),\ee where $\boldsymbol\theta(\mathrm{k})\triangleq [\theta_1(\mathrm{k}),\ldots,\theta_n(\mathrm{k})]^\top$ is the augmented phase state, and ${\mathcal H}^{G}$ is a topology-dependent operator which computes the relative phases. The state space on which $\Theta(\mathrm{k})$ evolves is defined by 
\begin{align}\label{eq:si}
\Pi=[-\pi,\pi]^{\|\mathcal{E}\|},
\end{align}
where $\|\mathcal{E}\|$ denotes the number of edges of the underlying network topology.

\begin{definition}[Stochastic phase-cohesiveness]\label{def1}
The~relative~phase process $\Theta(\mathrm{k})$ in \eqref{eq:T}, is stochastic phase-cohesive if $\textsf{P}_{\Theta(\mathrm{k})}(\tau_{S^{\Psi}_{G}(\gamma)} <\infty)=1$, where $S^{\Psi}_{G}(\gamma) \subset \Pi$ is a desired compact set, and $\tau_{S^{\Psi}_{G}(\gamma)}$ is the first return time of the stochastic process $\Theta(\mathrm{k})$ to the set $S^{\Psi}_{G}(\gamma)$.
\end{definition}

The above definition requires that the Markov chain returns to a desired set, $S^{\Psi}_{G}(\gamma)$, almost infinitely. In what follows, we present a stronger notion based on the concept of positive Harris recurrent chains \cite{meyn2012markov}. Conceptually, the below definition is the stochastic counterpart of deterministic ultimate boundedness. It implies that not only the chain revisits $S^{\Psi}_{G}(\gamma)$ almost infinitely but also the return time is bounded and the chain probability distribution converges to an invariant probability measure in a stationary regime. 

\begin{definition}[Ultimate stochastic phase-cohesiveness]\label{def2}
The~relative phase process $\Theta(\mathrm{k})$ in \eqref{eq:T} is ultimate stochastic phase-cohesive if there exists a constant $M < \infty$ such that $\sup \boldsymbol E_x[\tau_{S^{\Psi}_{G}(\gamma)}] \leq M$ for all $x \in S^{\Psi}_{G}(\gamma)$, where $\boldsymbol E_x$ denotes the expectation of events conditional on the chain beginning with $\Phi_0=x$.
\end{definition}
\subsection{In-phase and anti-phase sets}\label{sec:set}
Here, we introduce two sets, namely in-phase and anti-phase, which are useful in our problem statement. For a function $\Psi(\cdot)$, satisfying Assumption \ref{as1} and Property \ref{prop1}, choose $\gamma$ and $\gamma_{\max}$ such that $0 < \gamma < \gamma_{\max} < \pi$, $0 <|\Psi(\gamma)|=|\Psi(\gamma_{\max})| < \Psi_{\max}$ hold. Define the arcs $\underline{\Upsilon}$, ${\Upsilon}$, and $\overline{\Upsilon}$ as follows:
\be\label{eq:a1} 
\hspace{-10mm}\underline{\Upsilon}=[0,\gamma]: \forall \gamma_i \in \underline{\Upsilon}: 0 \leq |\Psi({\gamma}_i)| \leq |\Psi(\gamma)|,\hspace{9mm}
\ee
\be\label{eq:a2} 
{\Upsilon}=(\gamma,\gamma_{\max}): \forall \gamma_i \in {\Upsilon}: |\Psi(\gamma)| < |\Psi({\gamma}_i)| < \Psi_{\max},\hspace{7mm}
\ee
\be\label{eq:a3} 
\hspace{-15mm}\overline{\Upsilon}=[\gamma_{\max},\pi]: \forall \gamma_i \in \overline{\Upsilon}: 0 \leq |\Psi({\gamma}_i)| \leq |\Psi(\gamma)|.
\ee

\begin{figure}[t!]
    \centering
    \includegraphics[width=0.48\textwidth]{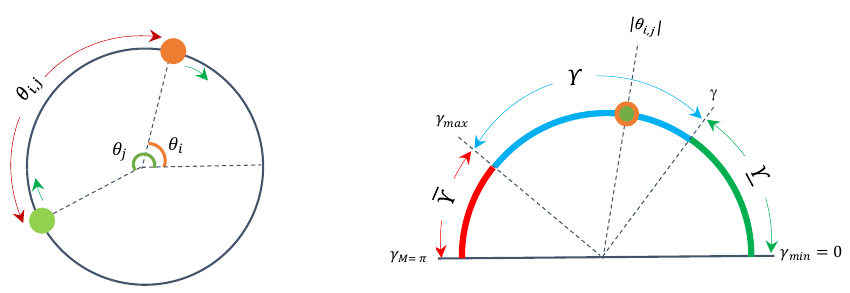}
    \caption{The relative phase of two oscillators $i$ and $j$ and its map on the arcs $\underline{\Upsilon}$, ${\Upsilon}$, and $\overline{\Upsilon}$.}
    \label{fig:f1}
\end{figure}
Figure \ref{fig:f1} depicts the introduced arcs together with the relative phase of two oscillators $i$ and $j$.\\[1mm]
\noindent We introduce the in-phase set, $S^{\Psi}_{G}(\gamma) \subset \Pi$, and anti-phase, $S^{\Psi}_{G}(\gamma_{\max}) \subset \Pi$, as
\be\label{eq:s1}
\hspace{3mm} S^{\Psi}_{G}(\gamma)=\{\theta_i \in \S^1, \theta_j \in \S^1:|\theta_{i,j}(\mathrm{k})| \in \underline{\Upsilon}, \forall (i,j) \in \mathcal{E}\}, 
\ee
\be\label{eq:s4}
S^{\Psi}_{G}(\gamma_{\max})=\{\theta_i \in \S^1\!\!,  \theta_j \in \S^1\!\!:|\theta_{i,j}(\mathrm{k})| \in \overline{\Upsilon}, \forall (i,j) \in \mathcal{E}\}.\hspace{2mm} 
\ee
\subsection{Problem statement}\label{sec:ps}
Here, we formulate the two problems to be studied in Sections \ref{sec:noise} and \ref{sec:rand}, respectively.\vspace{-2mm}
\begin{problem}\label{p1} Consider a network of $n$ phase-coupled oscillators where the dynamics of each oscillator obeys \eqref{eq:p1}. Let 
\be\label{eq:pp1} {\tilde\alpha}_{i,j}(\mathrm{k})= {\alpha}_{i,j} + n_{i,j}(\mathrm{k}),\quad {\tilde\omega}_i(\mathrm{k})= \omega_i + {\varpi}_i(\mathrm{k}),\ee
where ${\alpha}_{i,j}, \omega_i \in \R^{+}$ are constants, $n_{i,j}(\mathrm{k})$ and ${\varpi}_i(\mathrm{k})$ are i.i.d. stochastic variables at each time step $\mathrm{k}$ selected from
a continuous distribution with finite mean and variance. Also, assume that the initial relative phase $\theta_{i}(0)-\theta_{j}(0)$ is an arbitrary random variable, independent from the realizations of $n_{i,j}(\mathrm{k})$ and ${\varpi}_i(\mathrm{k})$, $\forall \mathrm{k}$ and $\forall i,j$, with a finite moment probability distribution with a continuous density function.

Denote the relative phase stochastic process corresponding to this network by $\Theta(\mathrm{k})$ as in \eqref{eq:T}. Our objective is to study the stability of $\Theta(\mathrm{k})$, especially, to characterize coupling,~$\kappa$, condition under which the process is stochastic phase-cohesive with respect to the in-phase and anti-phase sets defined in \eqref{eq:s1} and \eqref{eq:s4}. 
\end{problem} \vspace{-5mm}
\begin{problem}\label{p2} Now, consider a special case of the network in Problem \ref{p1} by assuming that the multiplicative uncertainties obey the Bernoulli distribution. That is, two oscillators $i$ and $j$ are coupled with probability $p$ and decoupled with probability $1-p$. When two oscillators are coupled, the weight of their corresponding edge equals $\kappa >0$. We assume constant exogenous frequencies. To make a distinction with Problem \ref{p1}, denote the multiplicative uncertainty by $\beta_{i,j}$. The dynamics of oscillator $i$ can then be expressed as
\begin{equation}\label{eq:p2}
{\theta_{i}}(\mathrm{k}+1)={\theta_{i}}(\mathrm{k})- \Big(\kappa \tau \sum_{j \in {\cal N}_i} \! {\beta}_{i,j}(\mathrm{k}) \Psi(\theta_{i,j}(\mathrm{k}))\Big)+ \tau {\omega}_{i},
\end{equation}
where $\omega_i \in \R^{+}$ represents the constant exogenous frequency of oscillator $i$.\\[1mm]
Our objective is to study the stability of this random network and obtain coupling conditions under which the relative phase stochastic process, $\Theta(\mathrm{k})$, is stochastic phase-cohesive with respect to the in-phase set. Moreover, we study conditions under which the phase-locked solution, \ie, phase-cohesiveness with respect to $S^{\Psi}_{G}(0)$, is achieved.
\end{problem}
\section{Oscillators in an uncertain network}\label{sec:noise}
This section studies Problem \ref{p1} to characterize conditions under which stochastic and ultimate stochastic phase-cohesiveness are achieved for the network. 

Considering the dynamics of individual oscillators in \eqref{eq:p1} and stochastic variables in \eqref{eq:pp1}, the relative phase dynamics of two interconnected oscillators $i$ and $j$ follows
\begin{align}\label{eq:p11}
{\theta_{i,j}}(\mathrm{k}+1)&= {\theta_{i,j}}(\mathrm{k})+ \tau {\tilde\omega}_{i,j}(\mathrm{k})
\\\nonumber &- \Big(\kappa \tau \sum_{\ell \in \{i,j\}} \sum_{e \in {\cal N}_\ell} {\tilde\alpha}_{\ell,e}(\mathrm{k}) \Psi(\theta_{\ell,e}(\mathrm{k}))\Big). 
\end{align}

\begin{assumption}\label{ass2}
The underlying deterministic network topology $G(\mathcal V,\mathcal E)$ is connected. The uncertain interconnections of oscillators are undirected, \ie, ${\tilde\alpha}_{i,j}(\mathrm{k})={\tilde\alpha}_{j,i}(\mathrm{k})$. Each multiplicative random variable obeys the normal distribution, \ie, ${\tilde\alpha}_{i,j} \sim {\mathcal N}(\mu_{i,j}, \sigma^2)$. For each of the nominal exogenous frequencies, it holds that $\boldsymbol E [\tilde\omega_i]>0$. 
\end{assumption}

Notice that identical variances for multiplicative uncertainties are not needed in our analysis. This assumption is only made for the sake of clarity of representation (see Theorem~\ref{th1}). From Assumption \ref{ass2}, the relative phase vector for the whole network, $\Theta(\mathrm{k})$ in \eqref{eq:T}, is equal to $B^{\top} {\boldsymbol\theta}(\mathrm{k})$, where $B$ is the incidence matrix of the underlying deterministic graph $G(\mathcal V,\mathcal E)$. The augmented relative phase dynamics can then be written in the following compact form
\begin{equation}\label{eq:rel}
B^\top {\boldsymbol\theta}(\mathrm{k}+1)= B^\top \Big(\boldsymbol\theta(\mathrm{k})+ \tau {\tilde{\boldsymbol \omega}}(\mathrm{k})- \tau \kappa B {\underaccent{\sim}{\boldsymbol \alpha}}(\mathrm{k}) \Psi(B^\top \boldsymbol\theta(\mathrm{k}))\Big), 
\end{equation}
where 
\begin{equation*}{\tilde{\boldsymbol \omega}}_{n \times 1}(\mathrm{k})= \big({\tilde\omega}_{1}(\mathrm{k}),\ldots,{\tilde\omega}_{n}(\mathrm{k})\big)^{\top},\end{equation*}
\begin{equation*}{\underaccent{\sim}{\boldsymbol \alpha}}_{m \times m}(\mathrm{k})= 
\begin{pmatrix}
    {\tilde\alpha}_{l_1}(\mathrm{k}) & 0 & \dots & 0 \\
    0 & {\tilde\alpha}_{l_2}(\mathrm{k}) & \dots & 0 \\
    \vdots & \vdots & \ddots & \vdots \\
    0 & 0 & \dots & {\tilde\alpha}_{l_m}(\mathrm{k})
  \end{pmatrix},
\end{equation*}
where $l_p \in \mathcal E$, with $p \in \{1,\ldots,m\}$, denotes the $p$-the edge of the underlying graph $G(\mathcal V,\mathcal E)$. 
\subsection{Markov properties}
Before studying the network's behavior, we discuss essential properties of the stochastic process described in \eqref{eq:rel}.
\begin{lemma}\label{mc1}
The relative phase stochastic process~\eqref{eq:rel} is a time-homogeneous and $\psi$-irreducible Markov chain evolving in a general space.
\end{lemma}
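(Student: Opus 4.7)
The plan is to verify each of the three assertions---Markov, time-homogeneous, and $\psi$-irreducible on a general state space---directly from the recursion \eqref{eq:rel} and Assumption~\ref{ass2}. Throughout I would treat $\Theta(\mathrm{k})$ as a process on the space $\Pi$ defined in \eqref{eq:si}, with the understanding that, because of the cycle constraints encoded in $B^\top$, its trajectory actually lives inside $\mathcal{R}(B^\top)\cap\Pi$.

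First, I would observe that \eqref{eq:rel} writes $\Theta(\mathrm{k}+1)$ as a measurable function of $\Theta(\mathrm{k})$ and the random pair $(\tilde{\boldsymbol\omega}(\mathrm{k}), {\underaccent{\sim}{\boldsymbol\alpha}}(\mathrm{k}))$, which by Problem~\ref{p1} is i.i.d.\ across $\mathrm{k}$ and independent of the past trajectory and of the initial condition. Consequently, the conditional distribution of $\Theta(\mathrm{k}+1)$ given the entire past collapses to its distribution conditional on $\Theta(\mathrm{k})$ alone, which is the Markov property. Time-homogeneity is then immediate: by Assumption~\ref{ass2} the joint law of $(\tilde{\boldsymbol\omega}(\mathrm{k}), {\underaccent{\sim}{\boldsymbol\alpha}}(\mathrm{k}))$ does not depend on $\mathrm{k}$, so the one-step transition kernel $P(\omega, D)$ depends only on $(\omega, D)$, matching Definition~\ref{def:markov}.

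Second, to classify the state space, I would note that $\Pi$ is a compact, uncountable subset of the separable metric space $\R^{\|\mathcal{E}\|}$, hence its Borel $\sigma$-algebra $\mathcal{B}(\Pi)$ is countably generated. By Definition~\ref{def:cont} this places $(\Pi,\mathcal{B}(\Pi))$ in the general rather than the countable category.

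Finally, for $\psi$-irreducibility I would take $\psi$ to be Lebesgue measure restricted to $\mathcal{R}(B^\top)\cap\Pi$ and establish the stronger statement that the one-step kernel $P(\omega,\cdot)$ is absolutely continuous with respect to $\psi$ with a strictly positive density. Conditioned on $\Theta(\mathrm{k})=\omega$, the right-hand side of \eqref{eq:rel} is an affine image under $B^\top$ of the jointly Gaussian random vector $(\tilde{\boldsymbol\omega}(\mathrm{k}), {\underaccent{\sim}{\boldsymbol\alpha}}(\mathrm{k}))$; the additive contribution $\tau B^\top\tilde{\boldsymbol\omega}(\mathrm{k})$ alone already has full-rank Gaussian covariance on $\mathcal{R}(B^\top)$ thanks to Assumption~\ref{ass2}, so the induced density is strictly positive on every nonempty open subset of $\mathcal{R}(B^\top)\cap\Pi$. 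Therefore $P(\omega,D)>0$ for every $D$ with $\psi(D)>0$, which by Definition~\ref{def:irreducibility} yields $\mathsf{P}(\tau_D<\infty)>0$ and hence $\psi$-irreducibility, with the maximal irreducibility measure dominating $\psi$ as discussed after Definition~\ref{def:irreducibility}.

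The one delicate point I expect to handle carefully is the subspace constraint $\mathcal{R}(B^\top)$: for graphs containing cycles, both the additive and multiplicative noise terms lie in this proper $(n-1)$-dimensional subspace of $\R^{\|\mathcal{E}\|}$, so irreducibility must be stated against Lebesgue measure on the subspace rather than against Lebesgue measure on the full cube $\Pi$. The modular arithmetic imposed by the $2\pi$ wrap-around is handled by observing that folding a strictly positive Gaussian density onto the torus preserves strict positivity on every open set.
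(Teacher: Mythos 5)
Your proposal is correct and follows essentially the same route as the paper's proof: the Markov property and time-homogeneity from the i.i.d., time-invariant noise structure, the "general space" classification from the uncountability of $\Pi$ with its countably generated Borel $\sigma$-algebra, and $\psi$-irreducibility from the everywhere-positive density that the continuous (Gaussian) uncertainties induce on the reachable states. Your explicit restriction of $\psi$ to $\mathcal{R}(B^\top)\cap\Pi$ for graphs with cycles, and the remark on folding the density onto the torus, are careful refinements of a point the paper's proof passes over silently, but they do not change the underlying argument.
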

The above properties guarantee that a non-zero probability exists for the Markov chain to make a transition from any initial state to any state in the whole state space.
In the following result, we show that every compact set in the state space $\Pi$ is also a petite set (Definition \ref{def:petite_set}). This equivalence is indeed helpful in studying stochastic stability of the Markov chain \eqref{eq:rel} as will be shown in Section \ref{sec:noiseA}.
\begin{lemma}\label{pr2}
Every compact set in the state space of the relative phase stochastic process~\eqref{eq:rel} is a petite set.
\end{lemma}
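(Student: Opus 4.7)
The plan is to prove the statement via the standard route of Meyn and Tweedie, namely by (i) establishing the Feller property of the chain, (ii) showing it is a T-chain, and then (iii) invoking the general result that, for a $\psi$-irreducible T-chain, every compact set is petite.

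First, I would verify the Feller property (Definition~\ref{def:feller1}) for the transition kernel $K$ associated with \eqref{eq:rel}. Given a bounded continuous function $h:\Pi\to\mathbb{R}$, one has
\begin{equation*}
K h(x) \;=\; \boldsymbol{E}\bigl[h\bigl(x + \tau B^\top \tilde{\boldsymbol{\omega}} - \tau \kappa\, B^\top B\,\underaccent{\sim}{\boldsymbol{\alpha}}\,\Psi(x)\bigr)\bigr],
\end{equation*}
where the expectation is over the Gaussian variables $\tilde{\boldsymbol{\omega}},\underaccent{\sim}{\boldsymbol{\alpha}}$. Since $\Psi$ is continuous by Assumption~\ref{as1}, the random argument inside $h$ depends continuously on $x$ for each noise realization. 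Because $h$ is bounded, the dominated convergence theorem yields continuity of $x\mapsto Kh(x)$, so $K$ maps $\mathcal{C}(\Pi)$ into $\mathcal{C}(\Pi)$ and the chain is Feller. Note that $\Pi=[-\pi,\pi]^{\|\mathcal{E}\|}$ is locally compact and separable, so Definition~\ref{def:feller1} applies.

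Second, I would upgrade Feller-plus-irreducibility to the T-chain property. By Lemma~\ref{mc1} the chain is $\psi$-irreducible; moreover, because the multiplicative and additive noises are Gaussian (hence possess everywhere-positive Lebesgue densities on $\mathbb{R}$), the one-step transition kernel $K(x,\cdot)$ admits a strictly positive density on a neighbourhood of the deterministic image of $x$ in $\Pi$. This lets me show that the support of the maximal irreducibility measure $\psi$ has non-empty interior, in fact covers $\Pi$ modulo the periodicity. Combining Feller continuity with such an open set in $\operatorname{supp}(\psi)$ gives a non-trivial continuous component of $K$, exhibiting a substochastic kernel $T(x,\cdot)$ that lower-bounds a sampled kernel $K_a$ and is lower semi-continuous in $x$ with $T(x,\Pi)>0$ for all $x$. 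That is precisely the T-chain property in the sense of \cite[Ch.~6]{meyn1994}.

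Third, I would apply the classical result \cite[Thm.~6.2.5]{meyn1994} (or its corollary for Feller chains), which states that for a $\psi$-irreducible T-chain every compact subset of the state space is a $\nu_a$-petite set for some sampling distribution $a$ and some non-trivial measure $\nu_a$. Since $\Pi$ is itself compact, this establishes the claim for every compact $C\subset\Pi$, completing the proof.

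The main obstacle I expect is the T-chain verification: one must produce the sampling distribution $a$ and the continuous substochastic minorant $T$ explicitly, rather than only asserting irreducibility. The Gaussian character of the noises, together with the continuity of $\Psi$, is what makes this tractable: after a finite number of steps $n$, the $n$-step density is jointly continuous in the initial state and the target, which allows one to set $a=\delta_n$ and define $T$ as a continuous lower bound on this density, uniformly positive on a compact neighbourhood. Once this minorant is obtained, the result follows immediately.
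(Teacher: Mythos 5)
Your proposal is correct and follows essentially the same route as the paper: verify the Feller property from continuity of the update map in the state for fixed noise realizations, show the support of the irreducibility measure has non-empty interior thanks to the Gaussian uncertainties, and conclude that compact sets are petite. The only cosmetic difference is that you make the intermediate T-chain step explicit and cite \cite[Thm.~6.2.5]{meyn1994}, whereas the paper invokes Proposition~6.2.8 of \cite{meyn2012markov} directly, which packages exactly that argument.
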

\subsection{Stochastic phase-cohesiveness}\label{sec:noiseA}
In this section, we study the phase-cohesiveness of the coupled oscillators modeled in \eqref{eq:rel} employing the mathematical tools for the stability analysis of Markov chains. 
We first derive sufficient conditions under which the relative phase Markov chain is Harris recurrent and hence stochastic phase-cohesive. Based on Theorem 9.1.8 of \cite{meyn2012markov}, a $\psi$-irreducible chain $\Phi$ defined on a state space $X$ is Harris recurrent if there exists a petite set $C \subset X$, and a function $V: X \rightarrow \R^{+}$ which is unbounded off petite sets (\ie, all sub-level sets of $V$ are petite), such that the following drift condition is satisfied
\begin{equation}\label{eq:th}
\Delta V=\boldsymbol E[V(\Phi_{\mathrm{k}+1})|\Phi_{\mathrm{k}}=x]-V(x)< 0, \quad \;\forall \Phi_{\mathrm{k}} \in X \setminus C. 
\end{equation}
Our results are based on an application of the mentioned theorem for our network problem setting. For the purpose of illustration, we first provide an example by analyzing the behaviour of a network of two coupled oscillators. 
\begin{example}\label{exam1}
Consider a network of two oscillators modeled as the discrete-time Markov chain in \eqref{eq:rel}:
\begin{equation*}
{\theta_{1,2}}(\mathrm{k}+1)= \underbrace{\theta_{1,2}(\mathrm{k}) - 2 \tau \kappa\  {\tilde\alpha}_{1,2}(\mathrm{k}) \Psi(\theta_{1,2}(\mathrm{k}))}_{a}+ \tau \underbrace{{\tilde\omega}_{1,2}(\mathrm{k})}_{b}, 
\end{equation*}
with $\theta_{1,2}= \theta_{1} - \theta_{2}$, ${\tilde\omega}_{1,2}= {\tilde\omega}_{1}-{\tilde\omega}_{2}$, and ${\tilde\alpha}_{1,2} \sim N(\mu, \sigma^2)$. Define $\underline{\Upsilon}=[0,\gamma], {\Upsilon}=(\gamma,\gamma_{\max}), \overline{\Upsilon}=[\gamma_{\max}, \pi]$. We will show that $S^{\Psi}_{G}(\gamma)=\{\theta_i \in \S^{1}, \theta_{j} \in \S^{1}: |\theta_{1,2}|\in \underline{\Upsilon}\}$ is an absorbing set for the chain, $\ie$, the chain is Harris recurrent with respect to this set. We will further show that the chain is transient on $S^{\Psi}_{G}(\gamma_{\max})=\{\theta_i \in \S^{1}, \theta_{j} \in \S^{1}: |\theta_{1,2}|\in \overline{\Upsilon}\}$, meaning that, the probability that the chain revisits $S^{\Psi}_{G}(\gamma_{\max})$ infinitely often is zero.

First assume that ${\boldsymbol E}[{\tilde\alpha}_{1,2}]=\mu>\sqrt{\frac{2 \sigma^2}{\pi}}$. We verify the conditions under which the chain returns to $S^{\Psi}_{G}(\gamma)$ with probability one. Take $V=|\theta_{1,2}|$ and assume that $|\theta_{1,2}(\mathrm{k})| > \gamma$. Calculating the drift of $V$ based on \eqref{eq:th}, we obtain 
\begin{equation*}
\Delta V(\mathrm{k})=\boldsymbol E[V(\mathrm{k}+1)]-V(\mathrm{k}) = \boldsymbol E[|\theta_{1,2}(\mathrm{k}+1)|] - |\theta_{1,2}(\mathrm{k})|.
\end{equation*}
We have $\boldsymbol E[|\theta_{1,2}(\mathrm{k}+1)|] \leq \boldsymbol E[|a|]+ \tau \boldsymbol E[|b|]$. Using the formula for the folded normal distribution~\cite{leone1961folded} (see Lemma~\ref{lem1}, inequality~\eqref{eq:bond}), we obtain
$$\boldsymbol E[|a|] \leq |\boldsymbol E[a]|+ 2 \kappa \tau \sqrt{\frac{2 \sigma^2}{\pi}} |\Psi(\theta_{1,2}(\mathrm{k}))|.$$
First, assume $\boldsymbol E[a]>0$. Also, without loss of generality, assume that $\theta_{1,2}(\mathrm{k}) >0$, thus $\Psi(\theta_{1,2}(\mathrm{k}))>0$. We obtain $$\Delta V(\mathrm{k})\leq - 2 \kappa \mu \Psi(\theta_{1,2}(\mathrm{k}))+ 2 \kappa \sqrt{\frac{2 \sigma^2}{\pi}} \Psi(\theta_{1,2}(\mathrm{k}))+ {\boldsymbol E}[|b|].$$
Thus, $\Delta V <0$ holds if
\begin{equation}\label{eq:ex1}
\kappa > {\frac{{\boldsymbol E}[|{\tilde\omega}_{1,2}(\mathrm{k})|]}{2 \Psi(\theta_{1,2}(\mathrm{k})) (\mu-\sqrt{\frac{2 \sigma^2}{\pi}})}}.
\end{equation}

To obtain a lower bound for $\kappa$, upper-bound of ${\boldsymbol E}[{\tilde\omega}_{1,2}]$ and lower-bound of $\Psi(\theta_{1,2})$ are needed in \eqref{eq:ex1}. Since we assumed that at time $\mathrm{k}$, $|\theta_{1,2}(\mathrm{k})| > \gamma$, then either $|\theta_{1,2}(\mathrm{k})| \in {\Upsilon}$ or $|\theta_{1,2}(\mathrm{k})| \in \overline{\Upsilon}$. With the former, we have $\min \Psi(\theta_{1,2})=|\Psi(\gamma)|$. However, for the latter case, $\min \Psi(\theta_{1,2})=0$ holds, which leads to the requirement of $\kappa \rightarrow \infty$ to guarantee a negative drift. Now, let us set $\min \Psi(\theta_{1,2})=|\Psi(\gamma)|$ in \eqref{eq:ex1}. As discussed above, the obtained $\kappa$ is not sufficiently large to guarantee the return of the chain from $\overline{\Upsilon}$, in this example equivalent to $S^{\Psi}_{G}(\gamma_{\max})$, to ${S}^{G}(\gamma)$. The question is whether $S^{\Psi}_{G}(\gamma_{\max})$ is an absorbing set. To verify, take $\bar{V}=\pi-|\theta_{1,2}|$ and assume that at time $\mathrm{k}$, $|\theta_{1,2}| < \gamma_{\max}$. Calculating the drift of $\bar{V}$, we obtain
$
\Delta {\bar V}(\mathrm{k})= |\theta_{1,2}(\mathrm{k})|-\boldsymbol E[|\theta_{1,2}(\mathrm{k}+1)|].
$
The above gives $\Delta {\bar V}(\mathrm{k})= -\Delta V(\mathrm{k})$. Thus, finding a bound for $\kappa >0$ to guarantee $\Delta {\bar V} <0$ is not possible. In particular, the lower bound for $\kappa$ in \eqref{eq:ex1}, with $\min \Psi(\theta_{1,2})=|\Psi(\gamma)|$, guarantees $\Delta V <0$, and thus $\Delta \bar{V} >0$. Based on Theorem 8.0.2 of \cite{meyn2012markov}, the chain is transient on $S^{\Psi}_{G}(\gamma_{\max})$. This implies that the probability that the chain revisits $S^{\Psi}_{G}(\gamma_{\max})$ infinitely often is zero. This example shows the positive effect of the additive uncertainties in the stabilization of the chain with respect to the in-phase set, $S^{\Psi}_{G}(\gamma)$. Notice that for ${\boldsymbol E}[{\tilde\alpha}_{1,2}]<0$, based on a similar argument, we can prove the stability of the chain with respect to the anti-phase set, $S^{\Psi}_{G}(\gamma_{\max})$. 
\end{example}
We are now ready to state our main result.
\begin{theorem}\label{th1}
Consider the relative phase stochastic process \eqref{eq:rel} under Assumptions \ref{as1} and \ref{ass2}. The following statements hold:
\begin{enumerate}
\item if ${\boldsymbol E}[{\underaccent{\sim}{\boldsymbol \alpha}}]$ is positive definite, then the relative phase process is stochastic phase-cohesive with respect to the in-phase set, $S^{\Psi}_{G}(\gamma)$ defined in \eqref{eq:s1}, provided that $\mu_m > \sqrt{\frac{2 \sigma^2}{\pi}}$,
\begin{subequations}\label{eq:kappa-11}\be
\kappa > \frac{{\boldsymbol E}_{\max} [|\Delta{\tilde{\omega}}|]}{\big(\mu_m - \sqrt{\frac{2 \sigma^2}{\pi}}\big) |\Psi(\gamma)| \lambda_{\min}(L_{e}({G}_{\tau}))}, \ee
and $\tau$ satisfies
\be\begin{aligned} \tau < \frac{\gamma}{\kappa \big(\mu_M + \sqrt{\frac{2 \sigma^2}{\pi}}\big) \Psi_{\max} \lambda_{\max}(L_e) + {\boldsymbol E}_{\max} [|\Delta{\tilde{\omega}}|]},
\end{aligned}\ee\end{subequations}
\item if ${\boldsymbol E}[{\underaccent{\sim}{\boldsymbol \alpha}}]$ is negative definite and $B^{\top} \boldsymbol\theta \geq \gamma_{\max} {\mathbf 1}_{m}$ is feasible for $G(\mathcal V,\mathcal E)$, then the relative phase process is stochastic phase-cohesive with respect to the anti-phase set, $S^{\Psi}_{G}(\gamma_{\max})$ defined in \eqref{eq:s4}, provided that 
\begin{subequations}\label{eq:kappa-12}
\be \kappa > \frac{|{\boldsymbol E}_{\max} [\Delta{\tilde{\omega}}]|}{|\Psi(\gamma)|\  \mu_{m}\  \lambda_{\min}(L_{e}({G}_{\tau}))},\ee
and $\tau$ satisfies
\be\begin{aligned} \tau < \frac{(\pi-\gamma_{\max})}{\kappa \Psi_{\max}\ \mu_{M}\ \lambda_{\max}(L_e)\ + {\boldsymbol E}_{\max} [|\Delta{\tilde{\omega}}|]},
\end{aligned}\ee
\end{subequations}
\end{enumerate}
where $\mu_m=\lambda_{\min}(|{\boldsymbol E}[{\underaccent{\sim}{\boldsymbol \alpha}}]|)$, $\mu_M=\lambda_{\max}(|{\boldsymbol E}[{\underaccent{\sim}{\boldsymbol \alpha}}]|)$, ${\boldsymbol E}_{\max} [\cdot]\!=\! {\underaccent{i,j}{\max}}{\boldsymbol E}[\cdot]$, $\lambda_{\max}(L_e)$ is the largest eigenvalue of $L_e(G)$, $\lambda_{\min}(L_{e}({G}_{\tau}))$ is the minimum among the smallest eigenvalues of all spanning trees of $G$. \hfill \QED
\end{theorem}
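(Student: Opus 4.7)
The plan is to invoke Theorem~9.1.8 of Meyn--Tweedie: a $\psi$-irreducible chain is Harris recurrent on a set $C$ whenever $C$ is petite and there exists a nonnegative $V$, unbounded off petite sets, satisfying $\Delta V(x)=\boldsymbol E[V(\Phi_{\mathrm{k}+1})\mid\Phi_{\mathrm{k}}=x]-V(x)<0$ for every $x\notin C$. Lemma~\ref{mc1} delivers $\psi$-irreducibility of \eqref{eq:rel}, and Lemma~\ref{pr2} establishes that compact subsets of $\Pi$ are petite; since $\Pi$ itself is compact, ``unbounded off petite sets'' is automatic for any continuous $V$. The task therefore reduces to choosing $V$ and verifying the drift inequality outside $S^{\Psi}_G(\gamma)$ in part~1 and outside $S^{\Psi}_G(\gamma_{\max})$ in part~2.

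For part~1 I set $z(\mathrm{k})=B^\top\boldsymbol\theta(\mathrm{k})$ so that, using $B^\top B=L_e(G)$, equation~\eqref{eq:rel} reads
\[
z(\mathrm{k}+1)=z(\mathrm{k})+\tau B^\top\tilde{\boldsymbol\omega}(\mathrm{k})-\tau\kappa L_e(G)\underaccent{\sim}{\boldsymbol\alpha}(\mathrm{k})\Psi(z(\mathrm{k})).
\]
I take the Lyapunov candidate $V(z)=\tfrac12\|z\|^2$, decompose $\underaccent{\sim}{\boldsymbol\alpha}(\mathrm{k})=\boldsymbol E[\underaccent{\sim}{\boldsymbol\alpha}]+N(\mathrm{k})$ with $N(\mathrm{k})$ zero-mean diagonal Gaussian, and expand $\boldsymbol E[\Delta V]=z^\top\boldsymbol E[\Delta z]+\tfrac12\boldsymbol E[\|\Delta z\|^2]$. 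The decisive step is a coercivity estimate of the form
\[
z^\top L_e(G)\,\boldsymbol E[\underaccent{\sim}{\boldsymbol\alpha}]\,\Psi(z)\;\ge\;\mu_m\,\lambda_{\min}(L_e(G_\tau))\,|\Psi(\gamma)|\,\|z\|
\]
whenever at least one component of $z$ lies outside $\underline\Upsilon$; I would obtain this by restricting to a spanning tree $G_\tau$ (on which $L_e(G_\tau)$ is positive definite), using the decomposition $L_e(G)=R^\top L_e(G_\tau)R$ from the preliminaries, and invoking the edge-wise sign alignment $z_\ell\Psi(z_\ell)\ge 0$ dictated by the odd property of Assumption~\ref{as1}. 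The folded-normal identity (Lemma~\ref{lem1} invoked in Example~\ref{exam1}) then simultaneously bounds the noise-induced terms in the cross product and the second-order remainder by $\kappa\tau(\mu_M+\sqrt{2\sigma^2/\pi})\Psi_{\max}\lambda_{\max}(L_e)+\boldsymbol E_{\max}[|\Delta\tilde\omega|]$. The lower bound on $\kappa$ in \eqref{eq:kappa-11} is then exactly what forces the deterministic negative drift to dominate the aggregated additive and multiplicative noise, while the upper bound on $\tau$ keeps the quadratic remainder subdominant and prevents the chain from overshooting the in-phase set in a single step.

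For part~2 I mirror the argument with the surrogate $\tilde V(z)=\tfrac12\|\pi\mathbf 1_m-|z|\|^2$, the squared distance to the anti-phase configuration. Because $\boldsymbol E[\underaccent{\sim}{\boldsymbol\alpha}]$ is negative definite, the sign flip turns the coupling term into a drift toward $|z|=\pi\mathbf 1_m$; the feasibility assumption $B^\top\boldsymbol\theta\ge\gamma_{\max}\mathbf 1_m$ is required precisely because the anti-phase target is not globally realizable on $G$ in general (it fails on graphs containing odd cycles, for example). The remaining estimates are symmetric to those in part~1 and produce the bounds in \eqref{eq:kappa-12}.

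The principal obstacle is the coercivity bound in part~1: $z^\top L_e(G)\boldsymbol E[\underaccent{\sim}{\boldsymbol\alpha}]\Psi(z)$ is not a quadratic form (because $\Psi$ is nonlinear) and $L_e(G)$ has nontrivial kernel as soon as $G$ contains a cycle, so the off-diagonal contributions can a priori cancel the per-edge positivity $z_\ell\Psi(z_\ell)\ge 0$. The clean way around this is the spanning-tree reduction paired with the uniform lower bound $|\Psi(\xi)|\ge|\Psi(\gamma)|$ available on $\Upsilon$ via Assumption~\ref{as1}. Example~\ref{exam1} already exhibits the mechanism on a single edge and, importantly, shows that the anti-phase region acts as a transient set under the same $\kappa$ estimate; lifting that transience argument from the two-oscillator case to the general network by the same tree decomposition completes the proof.
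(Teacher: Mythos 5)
There is a genuine gap. Your high-level skeleton matches the paper's (Theorem~9.1.8 of Meyn--Tweedie, Lemmas~\ref{mc1} and~\ref{pr2} for irreducibility and petiteness, the spanning-tree decomposition $L_e(G)=R^\top L_e(G_\tau)R$, and transience of the anti-phase set under positive-definite $\boldsymbol E[\underaccent{\sim}{\boldsymbol\alpha}]$), but the core step --- the drift inequality --- is not established, and the obstacle you yourself flag as ``principal'' is not actually resolved by the fix you propose. With $V(z)=\tfrac12\|z\|^2$ the first-order drift term is the mixed bilinear form $z^\top L_e(G)\,\boldsymbol E[\underaccent{\sim}{\boldsymbol\alpha}]\,\Psi(z)$, in which $z$ and $\Psi(z)$ are distinct (merely componentwise sign-aligned) vectors and $L_e(G)\boldsymbol E[\underaccent{\sim}{\boldsymbol\alpha}]$ is non-symmetric with a nontrivial kernel whenever $G$ has cycles. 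The spanning-tree reduction does not rescue this: writing the form as $(Rz)^\top L_e(G_\tau)(R\Psi(z))$ destroys the componentwise sign alignment because $R=[I\;\;T]$ mixes edge components, so no $\lambda_{\min}$ argument applies. The paper avoids this entirely by choosing a weighted $\ell_1$-type Lyapunov function $V=C^\top|B^\top\boldsymbol\theta|$ with coefficients $C$ dominated elementwise by $|\Psi(B^\top\boldsymbol\theta)|$, so that the coupling contribution to the drift becomes the genuine quadratic form $\Psi^\top(\mathrm{k})\,B^\top B\,\boldsymbol E[\underaccent{\sim}{\boldsymbol\alpha}]\,\Psi(\mathrm{k})$, which the tree decomposition and eigenvalue bounds handle cleanly.

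A second, independent problem is that your route cannot produce the stated constants. The term $\sqrt{2\sigma^2/\pi}$ in \eqref{eq:kappa-11} comes from the folded-normal bound $\boldsymbol E[|Z|]\le|\mu|+\sqrt{2\sigma^2/\pi}$ applied to the \emph{absolute value} of the updated relative phases (Lemma in Appendix~\ref{ax}); it is a first-absolute-moment correction intrinsic to an absolute-value Lyapunov function. With the quadratic $V$, the zero-mean part of $\underaccent{\sim}{\boldsymbol\alpha}$ vanishes from $z^\top\boldsymbol E[\Delta z]$ altogether and reappears only in the second-order remainder $\tfrac12\boldsymbol E[\|\Delta z\|^2]$ as a $\tau^2\sigma^2$ contribution --- a second moment that the folded-normal identity (which concerns $\boldsymbol E[|Z|]$, not $\boldsymbol E[Z^2]$) does not bound. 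So even if the coercivity estimate were granted, you would arrive at conditions of a different form, with $\mu_m$ rather than $\mu_m-\sqrt{2\sigma^2/\pi}$ in the denominator and an additional variance-dependent restriction on $\tau$, not the bounds claimed in the theorem. The same two issues afflict the part~2 surrogate $\tilde V(z)=\tfrac12\|\pi\mathbf 1_m-|z|\|^2$. To repair the argument you would need to switch to the paper's $\Psi$-weighted absolute-value Lyapunov function (or prove the mixed-form coercivity and redo the moment bookkeeping from scratch, which would yield a different theorem).
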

{\bf{Sketch of the proof:}} The proof is based on an application of Theorem 9.1.8 of \cite{meyn1994} in a network setting. The key is the construction of a positive and radially unbounded function $V: \Pi \rightarrow \R^{+}$ for the discrete-time network, such that the one-step drift of $V$ is negative if sufficient coupling conditions are satisfied. The detailed proof is provided in Appendix~\ref{ath1}.\hfill \QED\\[1mm] 
Theorem~\ref{th1} presents the coupling conditions under which the relative phase process is stochastic phase-cohesive. The main condition is the lower bound on $\kappa$. The bound on $\tau$ ensures that the sampling time is sufficiently small such that given a sufficiently large $\kappa$, the expectation of the maximum relative phase at each time-step is confined within the desired arc. We now continue by characterizing the coupling conditions under which ultimate phase-cohesiveness is achieved. We show that for the ultimate case, the coupling condition depends on $\tau$. Without loss of generality, the following result is presented for the case of ${\boldsymbol E}[{\underaccent{\sim}{\boldsymbol \alpha}}]>0$.

\begin{corollary}[Ultimate stochastic phase-cohesiveness]\label{th2}
Consider the discrete-time Markov chain in \eqref{eq:rel} representing the relative phase dynamics of $n$ interconnected oscillators under Assumptions \ref{as1} and \ref{ass2}. Assume that ${\boldsymbol E}[{\underaccent{\sim}{\boldsymbol \alpha}}] >0$ holds. Then, the relative phase process is ultimate stochastic phase-cohesive with respect to the in-phase set, $S^{\Psi}_{G}(\gamma)$ in \eqref{eq:s1}, if the following conditions hold:
\begin{subequations}\label{eq:kappa-2}
\be \kappa > \frac{\frac{1}{\tau |\Psi(\gamma)|}+ {\boldsymbol E}_{\max} [|\Delta{\tilde{\omega}}|]}{\big(\mu_m - \sqrt{\frac{2 \sigma^2}{\pi}}\big) |\Psi(\gamma)| \lambda_{\min}(L_{e}({G}_{\tau}))},\ee
\be\begin{aligned} \tau < \frac{\gamma- \frac{1}{m \Psi_{\max}}}{\kappa \big(\mu_M + \sqrt{\frac{2 \sigma^2}{\pi}}\big) \Psi_{\max} \lambda_{\max}(L_e) + {\boldsymbol E}_{\max} [|\Delta{\tilde{\omega}}|]},\hspace{3mm}
\end{aligned}\ee\end{subequations}
where $\mu_m=\lambda_{\min}(|{\boldsymbol E}[{\underaccent{\sim}{\boldsymbol \alpha}}]|)$, $\mu_M=\lambda_{\max}(|{\boldsymbol E}[{\underaccent{\sim}{\boldsymbol \alpha}}]|)$, ${\boldsymbol E}_{\max} [\cdot]\!=\! {\underaccent{i,j}{\max}}{\boldsymbol E}[\cdot]$, $\lambda_{\max}(L_e)$ is the largest eigenvalue of $L_e(G)$, $\lambda_{\min}(L_{e}({G}_{\tau}))$ is the minimum among the smallest eigenvalues of all spanning trees of $G$.  
\hfill \QED
\end{corollary}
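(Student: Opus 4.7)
The plan is to strengthen the drift argument used in the proof of Theorem~\ref{th1} so as to invoke the Foster--Lyapunov criterion for positive Harris recurrence (cf.\ \cite[Ch.~11 and Thm.~14.0.1]{meyn2012markov}). Concretely, if one can exhibit a petite set $C$ and a nonnegative, unbounded-off-petite-sets function $V$ such that $\Delta V(x) \leq -1 + b\, \mathbf{1}_C(x)$ for some constant $b<\infty$ and all $x\in\Pi$, then $\sup_{x\in C} \boldsymbol E_x[\tau_C]<\infty$, which is exactly the property required by Definition~\ref{def2}. By Lemma~\ref{pr2} every compact subset of $\Pi$ is petite, so taking $C=S^{\Psi}_G(\gamma)$ is admissible.

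I would reuse the Lyapunov function from the proof of Theorem~\ref{th1} (a norm of $B^\top\boldsymbol\theta$ weighted by the spanning-tree structure, as suggested by the presence of $\lambda_{\min}(L_e(G_\tau))$ and $\lambda_{\max}(L_e)$ in the hypotheses). The drift computation proceeds identically to Theorem~\ref{th1}: split the increment of $B^\top\boldsymbol\theta$ into its deterministic mean part, its multiplicative-noise part bounded in expectation via the folded-normal identity (Lemma~\ref{lem1}), and its additive-noise part bounded by $\boldsymbol E_{\max}[|\Delta\tilde\omega|]$. Outside $S^{\Psi}_G(\gamma)$, Assumption~\ref{as1} ensures $|\Psi(\theta_{i,j})| \geq |\Psi(\gamma)|$ on at least one edge, producing the key contractive contribution $-\tau\kappa\bigl(\mu_m - \sqrt{2\sigma^2/\pi}\bigr)|\Psi(\gamma)|\,\lambda_{\min}(L_e(G_\tau))$.

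The sharpening comes from demanding $\Delta V \leq -1$ rather than merely $\Delta V < 0$: after moving $\boldsymbol E_{\max}[|\Delta\tilde\omega|]$ across the inequality, one requires $\tau\kappa\bigl(\mu_m - \sqrt{2\sigma^2/\pi}\bigr)|\Psi(\gamma)|\,\lambda_{\min}(L_e(G_\tau)) \geq 1 + \tau\,\boldsymbol E_{\max}[|\Delta\tilde\omega|]$, which upon division by $\tau|\Psi(\gamma)|$ is exactly the bound \eqref{eq:kappa-2}a; the extra $\frac{1}{\tau|\Psi(\gamma)|}$ in the numerator is precisely the cost of promoting a strict inequality to a uniform gap of size one. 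The corresponding modification of the $\tau$ bound, subtracting $\frac{1}{m\Psi_{\max}}$ from $\gamma$ in the numerator of \eqref{eq:kappa-2}b, is the slack needed so that when $V$ is within $\frac{1}{m\Psi_{\max}}$ of $\gamma$ the expected next-step value still remains below $\gamma$, so $S^{\Psi}_G(\gamma)$ is preserved up to the drift margin; the factor $m$ reflects the accumulation of per-edge slack across the $m$ edges of $G$.

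The main obstacle will be establishing the uniform $-1$ drift over the \emph{entire} complement of $S^{\Psi}_G(\gamma)$, which includes the ``hard'' region where some relative phases lie in $\overline{\Upsilon}$. In Theorem~\ref{th1} this was sidestepped by arguing that the chain is transient on $S^{\Psi}_G(\gamma_{\max})$, which suffices for Harris recurrence; but for ultimate stochastic phase-cohesiveness one additionally needs uniformly bounded expected exit times from $\overline{\Upsilon}$-states. I would address this by coupling the Foster drift on $\Pi\setminus S^{\Psi}_G(\gamma_{\max})$ with a uniform upper bound on exit times from $S^{\Psi}_G(\gamma_{\max})$ that follows from $\psi$-irreducibility (Lemma~\ref{mc1}), the Feller property (Definition~\ref{def:feller1}), and the petiteness of compact sets (Lemma~\ref{pr2}); once these ingredients are assembled the conclusion is standard, but verifying uniformity across the full state space is the most delicate point.
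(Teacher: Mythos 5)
Your proposal matches the paper's proof: the authors likewise invoke the positive-Harris-recurrence drift criterion (Theorem 11.0.1 of \cite{meyn2012markov}), reuse the Lyapunov function and drift decomposition from Theorem~\ref{th1}, and simply strengthen the requirement from $\Delta V<0$ to $\Delta V<-1$, which is exactly how the extra $\frac{1}{\tau|\Psi(\gamma)|}$ and $\frac{1}{m\Psi_{\max}}$ terms arise. Your accounting of where those constants come from is correct, and your closing caveat about uniformity of the drift on the $\overline{\Upsilon}$ region is, if anything, more explicit than the paper's own (very terse) treatment.
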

\subsection{Mixed positive and negative multiplicative mean values}
Theorem \ref{th1} has proved that the relative phase Markov chain is stochastic phase-cohesive with respect to either the in-phase set, $S^{\Psi}_{G}(\gamma)$, or anti-phase set $S^{\Psi}_{G}(\gamma_{\max})$ depending on the sign of the mean values of the multiplicative uncertainties. The presence of mixed positive and negative mean values for the uncertain couplings could lead to a positive or negative drift condition. In this section, we study the stochastic stability of a network with an underlying connected and undirected line topology, \ie, a subclass of connected graphs without cycles in which each oscillator is connected to maximum two other oscillators. We obtain conditions under which all relative phases with positive multiplicative mean-values are recurrent to the in-phase set and those with negative mean-values are recurrent to the anti-phase set. We consider identical exogenous frequencies and assume zero mean value for the additive uncertainties, \ie, ${\boldsymbol E}[{\tilde\omega}_{i,j}(\mathrm{k})]=0, \forall (i,j)$. Recall the arcs $\underline{\Upsilon}$ and $\overline{\Upsilon}$, defined in \eqref{eq:a1} and \eqref{eq:a3}, respectively. Define,
\be\label{eq:s2}
U^{\Psi}_{G}(\gamma)=\{\theta_i \in \S^1\!\!, \theta_j \in \S^1\!\!:|\theta_{i,j}(\mathrm{k})| \in \underline{\Upsilon} \cup \overline{\Upsilon}, \forall (i,j) \in \mathcal{E}\}.
\ee 
\begin{proposition}\label{co:clus}
Consider the discrete-time Markov chain in \eqref{eq:rel} representing the relative phase dynamics of $n$ interconnected oscillators over a connected and undirected line network under Assumptions \ref{as1} and \ref{ass2}. Assume that ${\boldsymbol E}[{\tilde\omega}_{i,j}(\mathrm{k})]=0, \forall (i,j)$, and each of the multiplicative uncertainties obeys ${\tilde \alpha}_{i,j} \sim {\mathcal N}(\mu_{i,j}, \sigma^2)$, where $\mu_{i,j}$ is either positive or negative such that $|\mu_{i,j}|=\lambda >0$, and $\lambda \gg \sigma^2$. If 
$$\kappa >0 \quad {\text {and}}\quad \tau < \frac{\gamma}{2 \kappa \lambda \Psi_{\max}},$$
holds, then the relative phase process is stochastic phase-cohesive with respect to $U^{\Psi}_{G}(\gamma)$ in~\eqref{eq:s2} such that each relative phase whose uncertain coupling weight has a positive (negative) mean value is recurrent to the in-phase arc $\underline\Upsilon$ in \eqref{eq:a1} (anti-phase arc $\Upsilon$ in \eqref{eq:a3}).   
\hfill \QED
\end{proposition}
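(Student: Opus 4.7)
The plan is to apply the Harris-recurrence drift theorem of Meyn and Tweedie (Theorem~9.1.8 of \cite{meyn2012markov}) in the same template as Theorem~\ref{th1} and Example~\ref{exam1}, but with a Lyapunov function adapted to the signed clustering target. Split the edges as $\mathcal{E}=\mathcal{E}^{+}\cup\mathcal{E}^{-}$ where $\mathcal{E}^{\pm}=\{l\in\mathcal{E}:\mu_{l}=\pm\lambda\}$ and take
\[
V(\Theta)=\sum_{l\in\mathcal{E}^{+}}|\theta_{l}|+\sum_{l\in\mathcal{E}^{-}}\bigl(\pi-|\theta_{l}|\bigr),
\]
which is continuous and nonnegative on the compact state space $\Pi$, vanishes exactly on the correctly-clustered subset of $U^{\Psi}_{G}(\gamma)$, and whose sublevel sets are closed in $\Pi$, hence compact, hence petite by Lemma~\ref{pr2}. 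So $V$ is unbounded off petite sets as required by the drift theorem.

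Next, I would compute the edgewise drift. Conditional on $\Theta(\mathrm{k})$, the next value $\theta_l(\mathrm{k}+1)$ from \eqref{eq:p11} is Gaussian with mean $\theta_l-2\tau\kappa\mu_l\Psi(\theta_l)-\tau\kappa R_l$---where $R_l$ collects the (at most two) cross-edge contributions from the line's neighbours of $l$---and variance of order $\tau^{2}(\sigma^{2}+\sigma_\omega^{2})$. Applying the folded-normal bound from Lemma~\ref{lem1} exactly as in Example~\ref{exam1}, and using that $\boldsymbol E[\tilde\omega_{i,j}]=0$ eliminates the additive-noise contribution to the mean drift, gives a drift estimate whose dominant piece is the own-edge pull $-2\tau\kappa\mu_l\Psi(\theta_l)$ of the correct sign relative to the $l$-summand of $V$ (toward $0$ for $\mathcal{E}^{+}$, toward $\pi$ for $\mathcal{E}^{-}$); the folded-variance terms of order $\sqrt{\sigma^{2}/\pi}$ are negligible under $\lambda\gg\sigma^{2}$. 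If $\Theta(\mathrm{k})\notin U^{\Psi}_{G}(\gamma)$, definition~\eqref{eq:s2} forces at least one edge $l^{\star}$ to have $|\theta_{l^\star}|\in\Upsilon$; by Assumption~\ref{as1} this gives $|\Psi(\theta_{l^\star})|>|\Psi(\gamma)|$, and the step-size bound $\tau<\gamma/(2\kappa\lambda\Psi_{\max})$ forces the update of $|\theta_{l^\star}|$ to have magnitude less than $\gamma$, so the chain cannot overshoot into the opposite arc. Combined with $\psi$-irreducibility (Lemma~\ref{mc1}), Theorem~9.1.8 of \cite{meyn2012markov} then delivers Harris recurrence to $U^{\Psi}_{G}(\gamma)$, which is exactly the claimed stochastic phase-cohesiveness. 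The cluster-assignment part of the statement follows by a mirror-image transience argument as in Example~\ref{exam1} (via Theorem~8.0.2 of \cite{meyn2012markov}), ruling out permanent residence of $\mathcal{E}^{+}$-edges in $\overline{\Upsilon}$ or of $\mathcal{E}^{-}$-edges in $\underline{\Upsilon}$.

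The main obstacle I expect is the cross-edge term $R_l$: a naive triangle-inequality bound yields $|R_l|\le 2\lambda\Psi_{\max}$, which is not by itself smaller than the focal edge's pull $2\lambda|\Psi(\gamma)|$, so the edgewise drift argument does not close in isolation. The saving feature is that on a line (a tree) the edge Laplacian $L_e(G)$ appearing in \eqref{eq:rel} is positive definite, so summing the edgewise drifts of $V$ reassembles into a quadratic-type form in the oriented quantities $\mu_l\Psi(\theta_l)$ whose cross contributions partially cancel; the hypothesis $\lambda\gg\sigma^{2}$ then absorbs the remaining variance terms. Making this cancellation rigorous in the presence of the non-smoothness of $|\cdot|$ at $0$ and at $\pi$---which I would handle by the one-sided conditioning trick of Example~\ref{exam1}, splitting on the sign of $\theta_l$ and using $\tau\kappa\lambda\Psi_{\max}<\gamma/2$ to keep the conditional mean on the same side of the origin as $\theta_l$---is where the bulk of the technical work lies.
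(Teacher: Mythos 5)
Your overall template (drift criterion, petite sublevel sets, transience of the wrong cluster) matches the paper's, but your Lyapunov function and the mechanism for closing the drift estimate do not, and the step you yourself flag as open is exactly the step that does not work as you describe it. You take the sum $V=\sum_{l\in\mathcal{E}^{+}}|\theta_l|+\sum_{l\in\mathcal{E}^{-}}(\pi-|\theta_l|)$ and correctly observe that the edgewise estimate fails: the focal edge's pull is only $2\lambda|\Psi(\gamma)|$ in the worst case while a triangle-inequality bound on the two neighbouring edges gives $2\lambda\Psi_{\max}$. Your proposed rescue --- that summing the edgewise drifts ``reassembles into a quadratic-type form'' controlled by positive definiteness of $L_e$ --- is not correct. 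The summed drift of your $V$ is (up to the sign-splitting of $|\cdot|$) of the form $-\tau\kappa\,\mathrm{sign}(M\Psi)^{\top}L_e\,M\Psi$ with $M=\mathrm{diag}(\mu_l)$, i.e.\ a \emph{linear} functional of $M\Psi$ weighted by sign vectors, not the quadratic form $\Psi^{\top}M L_e M\Psi$; positive definiteness of $L_e$ does not make $\mathrm{sign}(v)^{\top}L_e v$ positive for general positive definite matrices. What does make it nonnegative here is the combinatorial structure of the line: in the edge-adjacency graph each edge has at most two neighbours and the two boundary edges have only one, so $\mathrm{sign}(v)^{\top}L_e v\geq\sum_l(2-\deg_l)|v_l|=|v_1|+|v_m|$, and strict positivity must be extracted from the endpoints. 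You never make this argument, and without it your proof does not close.

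The paper avoids the issue entirely by working with $\bar V=\max_{(i,j)}|\Psi(\theta_{i,j})|$ rather than a sum. For the maximizing edge $(\ell,e)$ one has $|\Psi(\theta_{p,q})|\le|\Psi(\theta_{\ell,e})|$ for both neighbours, and since all $|\mu_{l}|=\lambda$, the own-edge term (coefficient $2$) weakly dominates the two neighbour terms (coefficient $1$ each); the degenerate case of exact cancellation is resolved by walking along the line to a boundary edge, which has only one neighbour and hence a strictly dominant own-edge term of the correct sign. This is precisely the endpoint argument that your sum-based approach would also ultimately need, in the form of the $|v_1|+|v_m|$ lower bound above. A secondary, smaller discrepancy: you claim the $\tau$-bound keeps the one-step update of $|\theta_{l^{\star}}|$ below $\gamma$, but the stated bound $\tau<\gamma/(2\kappa\lambda\Psi_{\max})$ only controls the worst-case step $4\tau\kappa\lambda\Psi_{\max}$ below $2\gamma$ (the paper's condition $-\gamma+4\tau\kappa\lambda\Psi_{\max}<\gamma$), which is what is actually needed to prevent overshoot past the target arc.
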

\subsection{Relaxation of the odd coupling function}
Definition of $\underline{\Upsilon}$ in \eqref{eq:a1} assumes that $0 \in \underline{\Upsilon}$ and $\Psi(0) =0$. However, the latter condition can be relaxed such that the coupling function takes a zero value at a non-zero arc $\gamma_c: 0< \gamma_c <\pi$. This allows considering cases where the coupling is not an odd function on the entire interval $[-\pi,\pi]$. 

\begin{manualassumption}{\ref{as1}'}\label{as3}
Function $\Psi_{r}(\cdot)$~is $2 \pi$-periodic, and continuously differentiable. There exists an arc $\Upsilon \subset (0,\pi)$ such that $\forall \xi , z \in [0,\pi]$, if $\xi \not\in \Upsilon, z \in \Upsilon$, then $|\Psi_{r}(\xi)| \leq |\Psi_{r}(z)|$. In the interval $(0,\pi)$, there exists $\gamma_c \not\in \Upsilon$ such that $\gamma_c < \gamma, \forall \gamma \in \Upsilon$, and it holds that $\gamma_c$ and $\pi$ are the only roots of $\Psi_r(\cdot)$ in $[\gamma_c,\pi]$. Moreover, function $\Psi_{r}(\cdot)$ is odd on $[-\pi,-\gamma_c] \cup [\gamma_c, \pi]$, and $\forall \gamma_i \in [-\gamma_c, \gamma_c]: |\Psi_{r}(\gamma_i)| \leq \bar\Psi < |\Psi_{r}({\gamma})|$.
\end{manualassumption}

\noindent Following Assumption 1', we construct the two following sets:
\be\label{eq:a10} 
{\underline{\Upsilon}}^1=[0,\gamma_c]: \{\forall \gamma_i \in {\underline{\Upsilon}}^1: 0 \leq |\Psi_{r}({\gamma}_i)| \leq \bar{\Psi}\},\hspace{3mm}
\ee
\be\label{eq:a20} 
{\underline{\Upsilon}}^2=[\gamma_c, \gamma]: \{\forall \gamma_i \in {\underline{\Upsilon}}^2: 0 \leq |\Psi_{r}({\gamma}_i)| \leq |\Psi_{r}({\gamma})|\}.
\ee
Accordingly, we modify the definition of $S^{\Psi}_{G}(\gamma)$ as follows:
\be\begin{aligned}\label{eq:s10}
S^{\Psi_{r}}_{G}(\gamma)=\{\theta_i,\theta_j \in \S^1: |\theta_{i,j}(\mathrm{k})| \in {\underline{\Upsilon}^1} \cup {\underline{\Upsilon}}^2, \forall (i,j) \in \mathcal{E}\}. 
\end{aligned}\ee
\begin{proposition}\label{co:sh}
Consider the discrete-time Markov chain in \eqref{eq:rel} over a tree network topology and under Assumptions \ref{as3} and \ref{ass2}. Assume that ${\boldsymbol E}[{\underaccent{\sim}{\boldsymbol \alpha}}]$ is positive definite, and $\lambda_{\min}({\boldsymbol E}[{\underaccent{\sim}{\boldsymbol \alpha}}])=\mu_m \gg \sigma^2$ holds. If ${\hat\lambda}=\mu_{m} |\Psi_{r}(\gamma)|- \mu_{M} \bar{\Psi} \sqrt{(m-1)}>0$, then the relative phase process is stochastic phase-cohesive, with the absorbing set $S^{\Psi_{r}}_{G}(\gamma)$ in \eqref{eq:s10} provided that 
\begin{subequations}\label{eq:kappa-sh}\be
\kappa > \frac{\big({\hat\lambda}_1+{\hat\lambda}_2 (m-1)\big) {\boldsymbol E}_{\max} [|\Delta{\tilde{\omega}}|]}{{\hat\lambda} \lambda_{\min}(L_{e}) \big({\hat\lambda}_1+{\hat\lambda}_2 \sqrt{(m-1)}\big)},\ee\\[0.25mm]
\be \tau < \frac{\gamma}{\kappa \Psi_{\max} \lambda_{\max}(L_e)\  \mu_{M} + {\boldsymbol E}_{\max} [|\Delta{\tilde{\omega}}|]},\hspace{3mm}
\ee\end{subequations}
where $\mu_m=\lambda_{\min}({\boldsymbol E}[{\underaccent{\sim}{\boldsymbol \alpha}}])$, $\mu_M=\lambda_{\max}({\boldsymbol E}[{\underaccent{\sim}{\boldsymbol \alpha}}])$, ${\boldsymbol E}_{\max} [\cdot]\!=\! {\underaccent{i,j}{\max}}{\boldsymbol E}[\cdot]$, ${\hat\lambda}_1=\mu_m |\Psi_{r}(\gamma)|$, ${\hat\lambda}_2=\mu_M \bar{\Psi}$, and $\lambda_{\max}(L_e)$ and $\lambda_{\min}(L_e)$ are respectively the largest and smallest eigenvalues of the underlying tree graph's edge Laplacian.
\hfill \QED 
\end{proposition}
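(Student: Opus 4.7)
The plan is to extend the Harris-recurrence argument of Theorem~\ref{th1} to the tree network setting with the relaxed coupling $\Psi_r$ of Assumption~\ref{as3}. Replacing $\Psi$ by $\Psi_r$ does not affect time-homogeneity, $\psi$-irreducibility, or the identification of compact sets as petite sets, so Lemmas~\ref{mc1} and~\ref{pr2} carry over verbatim to \eqref{eq:rel} with $\Psi_r$. This puts Theorem~9.1.8 of \cite{meyn2012markov} at our disposal: it suffices to exhibit a Lyapunov function $V$ that is unbounded off petite sets and a compact set $C\supseteq S^{\Psi_r}_G(\gamma)$ outside of which the drift $\Delta V<0$.

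The natural choice is the quadratic $V(\Theta)=\tfrac12\|\Theta\|_2^2$ on $\Pi$. Writing $\Theta(\mathrm k+1)-\Theta(\mathrm k)=-\tau\kappa L_e\underaccent{\sim}{\boldsymbol\alpha}(\mathrm k)\Psi_r(\Theta(\mathrm k))+\tau B^\top\tilde{\boldsymbol\omega}(\mathrm k)$ (using $B^\top B=L_e$ and $\underaccent{\sim}{\boldsymbol\alpha}$ diagonal), expanding $V(\Theta+\Delta\Theta)-V(\Theta)=\Theta^\top\Delta\Theta+\tfrac12\|\Delta\Theta\|^2$, and taking conditional expectations produces
\[
\Delta V=-\tau\kappa\,\Theta^\top L_e\,\boldsymbol E[\underaccent{\sim}{\boldsymbol\alpha}]\,\Psi_r(\Theta)+\tau\,\Theta^\top B^\top\boldsymbol E[\tilde{\boldsymbol\omega}]+\tfrac12\boldsymbol E[\|\Delta\Theta\|^2\,|\,\Theta].
\]
The key exploit is that on a tree, $L_e$ is positive definite, so $\lambda_{\min}(L_e)>0$ plays the role taken by $\lambda_{\min}(L_e(G_\tau))$ in Theorem~\ref{th1}. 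Once the contraction term is bounded below by a positive quantity, the same reasoning as in Theorem~\ref{th1} transfers: the lower bound on $\kappa$ forces the contraction to dominate the linear additive-noise term $\tau\Theta^\top B^\top\boldsymbol E[\tilde{\boldsymbol\omega}]$, while the upper bound on $\tau$ keeps the quadratic residual $\tfrac12\boldsymbol E[\|\Delta\Theta\|^2]$ smaller than the contraction, so $\Delta V<0$ strictly outside a compact set containing $S^{\Psi_r}_G(\gamma)$.

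The crux is estimating the contraction term $\Theta^\top L_e\,\boldsymbol E[\underaccent{\sim}{\boldsymbol\alpha}]\,\Psi_r(\Theta)$ when $\Psi_r$ fails to be odd on $[-\gamma_c,\gamma_c]$. I would split the edge set into $\mathcal E_{\mathrm{act}}=\{\ell:|\Theta_\ell|\geq\gamma_c\}$, on which $\Psi_r$ is odd and sign-aligned with $\Theta_\ell$, and $\mathcal E_{\mathrm{dead}}=\{\ell:|\Theta_\ell|<\gamma_c\}$, on which only $|\Psi_r(\Theta_\ell)|\leq\bar\Psi$ is known. Outside $S^{\Psi_r}_G(\gamma)$ at least one edge $\ell^\star$ satisfies $|\Theta_{\ell^\star}|\geq\gamma\in\Upsilon$, hence $|\Psi_r(\Theta_{\ell^\star})|\geq|\Psi_r(\gamma)|$; using $\mu_m=\lambda_{\min}(\boldsymbol E[\underaccent{\sim}{\boldsymbol\alpha}])$ and the positive definiteness of $L_e$, the active edges produce an attractive contribution of order $\mu_m|\Psi_r(\gamma)|\,\lambda_{\min}(L_e)$. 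The at most $m-1$ dead-edge terms pollute this estimate by at most $\mu_M\bar\Psi$ in magnitude each, and aggregating them via Cauchy--Schwarz across the $(m-1)$-dimensional residual subspace produces the $\bar\Psi\sqrt{m-1}$ correction. The hypothesis $\hat\lambda=\mu_m|\Psi_r(\gamma)|-\mu_M\bar\Psi\sqrt{m-1}>0$ is precisely what makes the attractive contribution win, yielding the required $\Theta^\top L_e\,\boldsymbol E[\underaccent{\sim}{\boldsymbol\alpha}]\,\Psi_r(\Theta)\gtrsim\hat\lambda\,\lambda_{\min}(L_e)\,|\Psi_r(\gamma)|$-type lower bound on $\Pi\setminus S^{\Psi_r}_G(\gamma)$.

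With that lower bound in hand, the two bounds in \eqref{eq:kappa-sh} emerge by algebra: the $\kappa$ bound (a) is obtained by requiring the contraction to exceed both the linear noise term $\tau\Theta^\top B^\top\boldsymbol E[\tilde{\boldsymbol\omega}]$, estimated by $\tau\boldsymbol E_{\max}[|\Delta\tilde\omega|]$, and the aggregated dead-edge pollution carrying the $\hat\lambda_1+\hat\lambda_2(m-1)$ weight; the $\tau$ bound (b) is obtained exactly as in Theorem~\ref{th1} by forcing the one-step maximum displacement $\kappa\tau\Psi_{\max}\mu_M\lambda_{\max}(L_e)+\boldsymbol E_{\max}[|\Delta\tilde\omega|]\tau$ to stay below $\gamma$, so the chain cannot overshoot the absorbing set in a single step. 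Theorem~9.1.8 of \cite{meyn2012markov} then gives Harris recurrence to $S^{\Psi_r}_G(\gamma)$, which by Definition~\ref{def1} is the desired stochastic phase-cohesiveness. The main obstacle is the careful separation of $\mathcal E_{\mathrm{act}}$ from $\mathcal E_{\mathrm{dead}}$ inside $\Theta^\top L_e D\Psi_r(\Theta)$: because $L_e$ couples edges across these two groups nontrivially, getting the precise $\sqrt{m-1}$ factor and the $(\hat\lambda_1,\hat\lambda_2)$ pre-factors requires a simultaneous use of spectral inequality $x^\top L_e x\geq\lambda_{\min}(L_e)\|x\|^2$ for the active block and Cauchy--Schwarz for the cross and dead blocks, and is where the assumption $\mu_m\gg\sigma^2$ is used to neglect second-moment corrections in $\boldsymbol E[\underaccent{\sim}{\boldsymbol\alpha}\Psi_r(\Theta)]$.
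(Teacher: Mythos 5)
Your scaffolding (Lemmas~\ref{mc1} and~\ref{pr2} carry over, then verify the drift condition of Theorem~9.1.8 of \cite{meyn2012markov} outside a petite set, with $\mu_m\gg\sigma^2$ used to drop the folded-normal variance corrections) matches the paper, but there is a genuine gap at the crux, caused by your choice of Lyapunov function. With $V(\Theta)=\tfrac12\|\Theta\|_2^2$ the contraction term is the cross form $\Theta^\top L_e\,\boldsymbol E[\underaccent{\sim}{\boldsymbol\alpha}]\,\Psi_r(\Theta)$, a bilinear form in the two \emph{different} vectors $\Theta$ and $\Psi_r(\Theta)$. This quantity is not sign-definite even in the fully odd, sign-aligned case, so no combination of the spectral bound $x^\top L_e x\geq\lambda_{\min}(L_e)\|x\|^2$ on an ``active block'' and Cauchy--Schwarz on the rest can deliver the claimed $\hat\lambda\,\lambda_{\min}(L_e)$-type lower bound. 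Concretely, for the $3$-node path with $L_e=\bigl(\begin{smallmatrix}2&-1\\-1&2\end{smallmatrix}\bigr)$ and $\boldsymbol E[\underaccent{\sim}{\boldsymbol\alpha}]=I$, take $|\Theta_1|$ near $\gamma_{\max}$ (so $|\Psi_r(\Theta_1)|=|\Psi_r(\gamma)|$, which may be much smaller than $\Psi_{\max}$) and $|\Theta_2|\in\Upsilon$ with $|\Psi_r(\Theta_2)|=\Psi_{\max}$; then $\Theta^\top L_e\Psi_r(\Theta)\approx(2|\Theta_2|-\gamma_{\max})\Psi_{\max}<0$ whenever $\gamma<|\Theta_2|<\gamma_{\max}/2$. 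This configuration lies squarely outside $S^{\Psi_r}_{G}(\gamma)$, yet your drift has the wrong sign there, so the argument fails before the active/dead-edge bookkeeping even begins.

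The paper avoids exactly this obstruction by using a weighted $\ell_1$-type function whose coefficient vector is dominated elementwise by $|\mu\,\Psi_r(B^\top\boldsymbol\theta)|$; the key term then becomes a quadratic form $|\Psi_{r,\mu}|^\top L_e\,\underaccent{\sim}{\boldsymbol\alpha}\,\Psi_r$ in essentially a single vector. The non-odd edges are then handled not by Cauchy--Schwarz but by the explicit worst-case decomposition $|\Psi_{r,\mu}|=x+y$ and $\boldsymbol E[\underaccent{\sim}{\boldsymbol\alpha}]\Psi_r=x-y$ with $x=\mu_m[\,|\Psi_r(\gamma)|,0,\dots,0]^\top$ and $y=\mu_M[\,0,\bar\Psi,\dots,\bar\Psi]^\top$, which gives $x^\top L_e x-y^\top L_e y\geq\lambda_{\min}(L_e)\,\hat\lambda\,(\hat\lambda_1+\hat\lambda_2\sqrt{m-1})$; the $\sqrt{m-1}$ is simply $\|y\|_2/(\mu_M\bar\Psi)$, and the factor $\hat\lambda_1+\hat\lambda_2(m-1)$ in the numerator of the $\kappa$ bound is the $1$-norm of $x+y$ multiplying $\boldsymbol E_{\max}[|\Delta\tilde\omega|]$ --- constants your route would not reproduce. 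If you insist on a quadratic function, it must be placed on the node phases ($\tfrac12\|\boldsymbol\theta\|^2$), for which the contraction term $\boldsymbol\theta^\top B\,\underaccent{\sim}{\boldsymbol\alpha}\,\Psi_r(B^\top\boldsymbol\theta)=\sum_\ell\tilde\alpha_\ell\,\Theta_\ell\Psi_r(\Theta_\ell)$ does decompose edgewise and your active/dead split becomes meaningful; but that is a different proof with different constants.
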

\begin{remark}\label{re}[Insights from stochastic stability analysis] Problem~\ref{p1} considers stochastic uncertainties, with continuous probability distributions, which can take positive, negative, or zero values at every sample time with no restriction on their amplitudes. Compared with the deterministic stability, which depends on either the upper bound \cite{franci2010phase} (for additive), or the bound, sign, and behavior (for multiplicative) of the time-evolution of disturbances \cite{lu2018stability}, our results only require finite means and bounded variances. To elaborate further, consider phase oscillators in a network where the mean-values of multiplicative uncertainties are positive, and the mean values of additive noises are zero, \ie, ${\boldsymbol E}[{\underaccent{\sim}{\boldsymbol \alpha}}]>0$ and ${\boldsymbol E}_{\max} [|\Delta{\tilde{\omega}}|]=0$. Based on Theorem \ref{p1}, for any $\kappa >0$, the in-phase set is stochastic phase-cohesive. This also includes the case of $\gamma=0$, indicating stability of the phase-locked (exact synchronization) solution, despite the fact that the samples of all uncertainties can take any value at any time step. In comparison, for the deterministic Kuramoto model with the all-to-all topology, local input-to-state stability of the phase-locked solution given bounded additive perturbations, has been proved \cite{franci2010phase}.
We also notice that despite the effects of multiplicative uncertainties on the network topology at each sample time, our stochastic analysis does not require considering various possible time-varying typologies. Next section studies a specific model of multiplicative uncertainties, \ie, random connections. We show that stochastic phase-cohesiveness depends on the probability of interconnections of oscillators, comparable with results to the stochastic linear consensus problem \cite{hatano2005agreement}, and different from the deterministic setting, where stability depends on the time-varying graphs.
\end{remark}
\section{Oscillators in a random network}\label{sec:rand}
In this section, we investigate Problem \ref{p2} in order to characterize conditions under which the stochastic phase-cohesiveness is achieved for the coupled oscillators over an Erd\"{o}s-R\'{e}nyi random network. Considering each oscillator's dynamics in \eqref{eq:p2}, the relative phase dynamics of oscillators $i$ and $j$ obeys
\begin{align}\label{eq:rand1}
{\theta_{i,j}}(\mathrm{k}+1)&={\theta_{i,j}}(\mathrm{k})+ \tau {\omega}_{i,j}\\ \nonumber & -\Big(\kappa\tau \sum_{\ell \in \{i,j\}} \sum_{e \in {\cal N}_\ell} {\beta}_{\ell,e}(\mathrm{k}) \Psi(\theta_{\ell,e}(\mathrm{k}))\Big),
\end{align}
where ${\theta_{i,j}}(\mathrm{k})$ and $\omega_{i,j} \in \R$ represent the relative phase and relative exogenous frequency of two oscillators $i$ and $j$, respectively. The random variable ${\beta}_{\ell,e}(\mathrm{k})$ obeys the Bernoulli distribution, \ie,
\begin{equation}\label{eq:bd}
{\beta}_{\ell,e}(\mathrm{k})=
\begin{cases}
1\ \text{with probability \ $p$}\\
0\  \text{with probability \ $1-p$}.
\end{cases}
\end{equation}
Considering a probable absence of each edge in the network, a graph with the incidence matrix $B^{\mathrm{k}}$ presents the topology of the network at time $\mathrm{k}$. We denote the set of incidence matrices associated with random graphs, with $n$ nodes and probability of link failure of $1-p$, by ${\mathbf B}(n,p)$. Thus, $B^{\mathrm{k}} \in {\mathbf B}(n,p)$. To stay consistent with the problem formulation of the previous section and without loss of generality, we assume the existence of a maximal graph for the given problem, for instance a complete graph, denoted by $G$ whose incidence matrix is $B$. We then model the randomness with the term $B {\boldsymbol \beta}({\mathrm{k}})$ where ${\boldsymbol \beta}_{m \times m}(\mathrm{k})$ is a diagonal matrix capturing the random interconnections.

Thus, the relative phase vector for the whole network, $\Theta(\mathrm{k})$ in \eqref{eq:T}, is equal to $B^{\top} {\boldsymbol\theta}(\mathrm{k})$. The compact form of the relative phase dynamics follows
\begin{equation}\label{eq:relran}
B^\top {\boldsymbol\theta}(\mathrm{k}+1)= B^\top \Big(\boldsymbol\theta(\mathrm{k}) + \tau {\boldsymbol \omega}(\mathrm{k})- \tau \kappa B {\boldsymbol \beta}(\mathrm{k}) \Psi(B^\top \boldsymbol\theta(\mathrm{k}))\Big), 
\end{equation}
where ${\boldsymbol \beta}(\mathrm{k})$ is a diagonal matrix whose diagonal elements obey~\eqref{eq:bd} and
\begin{equation*}{\boldsymbol \omega}_{n \times 1}(\mathrm{k})= \big({\omega}_{1}(\mathrm{k}),\ldots,{\omega}_{n}(\mathrm{k})\big)^{\top}.\end{equation*}

The above model allows us to study the stochastic phase-cohesiveness of the network by using the developed setting in the previous sections. Different from Section \ref{sec:noise}, this section focuses on the case of constant exogenous frequencies. We now discuss the chain properties and show that $S^{\Psi}_{G}(\gamma)$ in~\eqref{eq:a1} is the stable set for this model as well. 
\begin{lemma}\label{pr3}
The relative phase stochastic process in \eqref{eq:rand1} is a $\psi$-irreducible Markov chain on a countable space.
\end{lemma}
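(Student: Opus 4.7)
The plan is to verify, in turn, the time-homogeneous Markov property, the countability of the state space, and finally $\psi$-irreducibility, mirroring the structure of Lemma~\ref{mc1} but adapted to the discrete nature of the Bernoulli driving.

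First, I would establish the time-homogeneous Markov property directly from \eqref{eq:rand1}. Since ${\boldsymbol \beta}(\mathrm{k})$ is i.i.d.\ across time steps and all remaining ingredients ($\kappa$, $\tau$, $B$, ${\boldsymbol\omega}$ and $\Psi$) are deterministic and time-invariant, the conditional distribution of $\Theta(\mathrm{k}+1)$ given the full history depends only on $\Theta(\mathrm{k})$, and the one-step kernel
$$P(\omega, D) = \mathsf{P}\bigl(\Theta(\mathrm{k}+1)\in D \mid \Theta(\mathrm{k})=\omega\bigr)$$
does not depend on $\mathrm{k}$, matching Definition~\ref{def:markov}.

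Second, I would show that the reachable state-space is countable. Fixing the initial state $\Theta(0)$, observe that ${\boldsymbol\beta}(\mathrm{k})$ is a diagonal matrix of $m$ Bernoulli entries and therefore takes at most $2^m$ distinct values at each step. Hence the set $X_{\mathrm{k}}$ of states reachable from $\Theta(0)$ in exactly $\mathrm{k}$ transitions has cardinality at most $2^{m\mathrm{k}}$, and the full reachable orbit $X = \bigcup_{\mathrm{k}\ge 0} X_{\mathrm{k}}$ is countable. Equipping $X$ with the power-set $\sigma$-algebra places the chain in the countable setting of Definition~\ref{def:cont}; this contrasts sharply with the general state-space of Lemma~\ref{mc1}, where the Gaussian perturbation forces a continuous component.

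Third, for $\psi$-irreducibility I would exploit the extreme Bernoulli realizations. The event $\{{\boldsymbol \beta}(\mathrm{k})=0\}$, in which no edge is active, occurs with probability $(1-p)^m>0$ at each step, and under this event \eqref{eq:rand1} reduces to the deterministic drift $\Theta(\mathrm{k}+1)=\Theta(\mathrm{k})+\tau B^\top{\boldsymbol\omega}\pmod{2\pi}$. More generally, for any $0$-$1$ diagonal pattern the corresponding realization of ${\boldsymbol \beta}(\mathrm{k})$ occurs with probability $p^{|S|}(1-p)^{m-|S|}>0$. By composing such positive-probability transitions over finitely many steps, I would show that from any $\omega\in X$ a suitable reference set of states in $X$ is reached with positive probability in finitely many steps, and define $\psi$ as the counting measure supported on that set; Definition~\ref{def:irreducibility} then gives $\psi$-irreducibility.

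The hard part will be the last step: identifying a suitable support for $\psi$ that is accessible from every state, because the composition of the constant drift $\tau B^\top{\boldsymbol\omega}$ with the Bernoulli-driven nonlinear maps is intricate on the torus. Unlike the continuous case of Lemma~\ref{mc1}, where the full-support Gaussian kernel makes irreducibility essentially automatic, here one must argue by explicit construction of admissible edge patterns that a common accessible subset of the countable orbit exists; this is the delicate step distinguishing this lemma from its continuous counterpart.
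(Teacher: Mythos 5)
Your handling of the Markov property and of countability is fine, and on countability you are actually more careful than the paper: you track the reachable orbit $X=\bigcup_{\mathrm{k}\ge 0}X_{\mathrm{k}}$ with $|X_{\mathrm{k}}|\le 2^{m\mathrm{k}}$, whereas the paper only observes that, conditional on the current state, the next state lies in a finite set of at most $2^m$ points determined by the edge configurations. The genuine gap is in the third step: you announce that a common accessible reference set must be constructed and a measure $\psi$ defined on it, you correctly identify this as the delicate point, and then you stop. As written, $\psi$-irreducibility is a plan, not a proof. The difficulty you flag is real under your reading of the state space: on the reachable phase orbit, accessibility of a fixed reference set from \emph{every} state requires controlling the interplay of the constant drift $\tau B^\top{\boldsymbol\omega}\pmod{2\pi}$ with the $2^m$ nonlinear edge-pattern maps, and nothing in your sketch guarantees that the forward orbits from two different states ever intersect a common set of positive $\psi$-measure.

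The paper sidesteps exactly this difficulty by working at a coarser level. It takes the relevant countable space to be the set of outcomes generated by the independent Bernoulli processes (the at most $2^m$ edge configurations, equivalently the finite set of candidate next states from the current one), and argues that because each configuration occurs with probability $p^{|S|}(1-p)^{m-|S|}>0$ at every step, independently of the history and of the current phase state, every element of that countable set is reached in one step with positive probability; irreducibility with respect to the counting measure on that set is then immediate. Whether or not you find that framing fully satisfying as a statement about the phase chain itself, it is the route the paper takes, and it is precisely the move that dissolves the obstacle you left open. To complete your version you would either need to adopt the paper's coarser identification of the countable space, or else actually exhibit the accessible support of $\psi$ on the phase orbit; the latter is not done in your proposal.
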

\vspace{2mm}
The following result presents a counterpart of Theorem \ref{th2} for the case of random networks. 
\begin{theorem}\label{th3}
Consider the discrete-time Markov chain in \eqref{eq:relran} under Assumption \ref{as1}. Assume that $\omega_{i,j} \neq 0, \forall i,j$. If $p$ is strictly positive, the relative phase process is  stochastic phase-cohesive with respect to $S^{\Psi}_{G}(\gamma)$ in \eqref{eq:s1} if the following conditions hold
\begin{subequations} \be \kappa > \frac{|\Delta_{\max}{\omega}|}{|\Psi(\gamma)|\ p\ \lambda_{\min}(L_{e}({G}_{\tau}))},\ee
\be\begin{aligned} \tau < \frac{\gamma}{\kappa\ \Psi_{\max}\ \lambda_{\max}(L_e)},\hspace{3mm}
\end{aligned}\ee\label{eq:kappa-3}\end{subequations}
where $|\Delta_{\max}{\omega}|= {\underaccent{i,j}{\max}} |\omega_i- \omega_j|$, $\lambda_{\max}(L_e)$ is the largest eigenvalue of $L(G)$, and $\lambda_{\min}(L_{e}({G}_{\tau}))$ is the minimum among the smallest eigenvalues of the spanning trees of $G$.
\hfill \QED
\end{theorem}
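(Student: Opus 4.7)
The plan is to mirror the proof strategy of Theorem \ref{th1}: verify the drift condition of Theorem 9.1.8 of \cite{meyn2012markov} for the chain \eqref{eq:relran}. By Lemma \ref{pr3} the chain is $\psi$-irreducible, and since by Lemma \ref{pr3} the state space is countable, every finite (equivalently compact) subset is trivially petite, so the remaining work is to exhibit a Lyapunov function $V:\Pi\to\mathbb{R}^{+}$ unbounded off petite sets, together with a petite set containing $S^{\Psi}_{G}(\gamma)$, outside of which the one-step drift
\[
\Delta V(x) \;=\; \boldsymbol{E}[V(\Theta(\mathrm{k}+1))\mid \Theta(\mathrm{k})=x] \;-\; V(x)
\]
is strictly negative.

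First I would use the spanning-tree decomposition $B=[B(G_{\tau})\ B(G_{c})]$ and $R=[I\ T]$ with $T=L_{e}^{-1}(G_{\tau})B^{\top}(G_{\tau})B(G_{c})$, exactly as in the proof of Theorem \ref{th1}, and take $V(\Theta)$ to be the maximum absolute value of the spanning-tree relative phases, so that $\{V\le\gamma\}$ coincides with $S^{\Psi}_{G}(\gamma)$ and the sub-level sets of $V$ are petite. Because $\beta_{i,j}(\mathrm{k})\in\{0,1\}$ are i.i.d.\ Bernoulli with mean $p$ and $\boldsymbol\omega$ is deterministic, taking the conditional expectation of \eqref{eq:relran} collapses the multiplicative randomness to a simple factor $p$:
\[
\boldsymbol{E}[B^{\top}\boldsymbol\theta(\mathrm{k}+1)\mid\Theta(\mathrm{k})] \;=\; B^{\top}\boldsymbol\theta(\mathrm{k}) + \tau B^{\top}\boldsymbol\omega - p\,\tau\kappa\, L_{e}(G)\,\Psi\bigl(B^{\top}\boldsymbol\theta(\mathrm{k})\bigr).
\]
This is the main simplification relative to Theorem \ref{th1}: there is no folded-normal variance correction, and the only stochastic penalty in the drift is the reduction of the effective coupling strength from $\kappa$ to $p\kappa$.

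Next I would bound $\Delta V$ from above when $\Theta(\mathrm{k})\notin S^{\Psi}_{G}(\gamma)$. On such an $x$, the spanning-tree edge attaining $V(x)>\gamma$ has a relative phase whose coupling term is at least $|\Psi(\gamma)|$ in magnitude by Assumption \ref{as1}; applying the edge-Laplacian bound $\lambda_{\min}(L_{e}(G_{\tau}))$ projected onto that edge (as in Theorem \ref{th1}) and using the worst-case relative-frequency bound $|\Delta_{\max}\omega|$ on the additive term yields
\[
\Delta V(x) \;\le\; -\,\tau\,\kappa\,p\,|\Psi(\gamma)|\,\lambda_{\min}(L_{e}(G_{\tau})) \;+\; \tau\,|\Delta_{\max}\omega|,
\]
which is strictly negative precisely when the coupling bound in \eqref{eq:kappa-3}(a) is satisfied. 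The sampling-time bound in \eqref{eq:kappa-3}(b) enters, exactly as in Theorem \ref{th1}, to ensure that a single discrete-time step cannot overshoot the arc $\underline\Upsilon$ by jumping past $\pi$ (wrapping around the circle) or cross from $\underline\Upsilon$ directly into $\overline\Upsilon$, so that the monotone-decrease computation for $V$ remains valid pointwise and not merely in expectation.

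The main obstacle I anticipate is the interaction between the per-realization Bernoulli on/off pattern of $\boldsymbol\beta(\mathrm{k})$ and the pointwise bound on a single step. Unlike the Gaussian case, where all edges are always present and the drift inequality follows from linearity of expectation alone, here a given realization may disconnect the very edge that attains $V(x)$, producing a brief positive excursion of $V$. The way to circumvent this is to carry out the drift computation only in conditional expectation and to invoke the spanning-tree eigenvalue $\lambda_{\min}(L_{e}(G_{\tau}))$ rather than any instantaneous graph's Laplacian, so that the contraction is guaranteed on average over $\boldsymbol\beta(\mathrm{k})$. Once the negative drift is established, Theorem 9.1.8 of \cite{meyn2012markov} delivers Harris recurrence of the chain to $S^{\Psi}_{G}(\gamma)$, which by Definition \ref{def1} is precisely stochastic phase-cohesiveness, completing the proof.
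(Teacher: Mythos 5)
Your overall strategy matches the paper's: both proofs reduce the Bernoulli randomness to an effective coupling $p\kappa$ by taking conditional expectations, both use the countability of the state space to get petiteness of compact sets for free, and both use the small-$\tau$ condition to guarantee that all realizations of $B^{\top}\boldsymbol\theta(\mathrm{k}+1)$ keep their sign so that $\boldsymbol{E}[\,|B^{\top}\boldsymbol\theta(\mathrm{k}+1)|\,]=\boldsymbol{E}[B^{\top}\boldsymbol\theta(\mathrm{k}+1)]$ and the absolute values in $V$ commute with expectation (the paper derives the $\tau$ bound from $\gamma\ge\tau(\kappa\,d_{\max}\Psi_{\max}+|\Delta_{\max}\omega|)$ with $d_{\max}\le\lambda_{\max}(L_e)$). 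The paper works with the weighted-sum Lyapunov function of \eqref{eq:lya} and invokes Theorem 11.0.1 of \cite{meyn2012markov}, whereas you use a max-type function and Theorem 9.1.8; these differences are cosmetic, though note that the sublevel set $\{V\le\gamma\}$ of a max over \emph{spanning-tree} edges does not coincide with $S^{\Psi}_{G}(\gamma)$ when the maximal graph has cycles, since non-tree relative phases are sums of tree relative phases around cycles.

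There is, however, a genuine gap. You assert that whenever $V(x)>\gamma$ the edge attaining the maximum contributes a coupling term of magnitude at least $|\Psi(\gamma)|$ ``by Assumption \ref{as1}.'' This is true only on the middle arc $\Upsilon=(\gamma,\gamma_{\max})$. On the anti-phase arc $\overline{\Upsilon}=[\gamma_{\max},\pi]$ Assumption \ref{as1} gives the \emph{opposite} inequality ($|\Psi|$ is smaller there than on $\Upsilon$), and indeed $\Psi(\pi)=0$, so no finite $\kappa$ yields a uniformly negative drift on all of $\Pi\setminus S^{\Psi}_{G}(\gamma)$. Your proof therefore only establishes recurrence from $\Upsilon$, not from $\overline{\Upsilon}$. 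The paper closes this hole with a separate transience argument: using the hypothesis $\omega_{i,j}\neq 0$ for all $i,j$ (which your proposal never invokes, even though it appears in the theorem statement precisely for this purpose) together with the drift computation applied to $\bar V=C^{\top}(\pi\mathbf{1}_m-|B^{\top}\boldsymbol\theta|)$ as in Example~\ref{exam1} and Theorem~\ref{th1}, it shows the chain is transient on $S^{\Psi}_{G}(\gamma_{\max})$, i.e.\ it leaves $\overline{\Upsilon}$ with probability one, after which the negative drift from $\Upsilon$ takes over. Without this step the claimed Harris recurrence to $S^{\Psi}_{G}(\gamma)$ does not follow.
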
 
We now assume identical exogenous frequencies and derive conditions under which the chain achieves phase-synchronization (or phase-locking). With this, we show the applicability of our definition of stochastic phase-cohesiveness for a stronger notion of synchronization, \ie, phase-locking.
\begin{corollary}\label{th4}
Consider the discrete-time Markov chain in \eqref{eq:relran} under Assumption \ref{as1} and with identical exogenous frequencies. Assuming a sufficiently small $\tau$, the relative phase Markov chain is stochastic phase-cohesive with respect to the origin provided that $p>0$ and $\kappa >0$.
\hfill \QED
\end{corollary}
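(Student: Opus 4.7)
The plan is to obtain Corollary \ref{th4} as a direct specialization of Theorem \ref{th3} to the identical-frequency setting. With $\omega_i \equiv \omega$ for every oscillator, the relative-frequency term $|\Delta_{\max}\omega| = \max_{i,j}|\omega_i - \omega_j|$ vanishes. Substituting $|\Delta_{\max}\omega| = 0$ into the sufficient conditions \eqref{eq:kappa-3} collapses the coupling-gain bound \eqref{eq:kappa-3}a to $\kappa > 0$, and reduces the sampling-time bound \eqref{eq:kappa-3}b to $\tau < \gamma/(\kappa\Psi_{\max}\lambda_{\max}(L_e))$, which admits a positive solution for any desired $\gamma > 0$ once $\tau$ is chosen small enough. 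Theorem \ref{th3} then delivers stochastic phase-cohesiveness with respect to $S^{\Psi}_{G}(\gamma)$ for every $\gamma \in (0, \gamma_{\max})$, as long as $p > 0$ and $\kappa > 0$.

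To promote this to phase-cohesiveness with respect to the origin $S^{\Psi}_{G}(0)$ (i.e., the phase-locked set), I would revisit the Foster-Lyapunov drift inequality underlying Theorem \ref{th3}. Setting $|\Delta_{\max}\omega| = 0$ removes the only positive contribution to the drift of the candidate $V(\Theta) = \|B^\top\boldsymbol{\theta}\|$, so $\Delta V < 0$ holds strictly at every $\Theta$ with at least one nonzero relative phase, and $\Delta V = 0$ only at $\Theta = 0$, which is itself a deterministic fixed point of \eqref{eq:relran} because $\Psi(0) = 0$ and $\omega_{i,j} = 0$. Combining the strict negative drift off the origin with the $\psi$-irreducibility of the chain on its countable state space guaranteed by Lemma \ref{pr3}, a standard Foster-Lyapunov absorbing-state argument yields almost-sure finite hitting of $\{0\}$.

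The main obstacle is making the hit of the singleton $\{0\}$ rigorous rather than merely asymptotic. The countability provided by Lemma \ref{pr3} is essential here: from any fixed initial configuration, the Bernoulli-driven updates confine the chain to a countable orbit, so the origin (when reachable) is a positive-$\psi$-measure atom rather than a Lebesgue-null point. An alternative, shorter route is to interpret stochastic phase-cohesiveness with respect to the origin as the conjunction, over all $\gamma > 0$, of stochastic phase-cohesiveness with respect to $S^{\Psi}_{G}(\gamma)$, exploiting $\bigcap_{\gamma > 0} S^{\Psi}_{G}(\gamma) = S^{\Psi}_{G}(0)$ and the monotonicity of return-time events in $\gamma$; the corollary then follows from Theorem \ref{th3} without any additional technical work.
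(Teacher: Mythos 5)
Your proposal is correct and follows essentially the same route as the paper: specialize Theorem~\ref{th3} by substituting $|\Delta_{\max}\omega|=0$, observe that the coupling condition \eqref{eq:kappa-3} then holds for every $\kappa>0$ and every $\gamma\in(0,\pi)$ with $\tau\sim\gamma/\kappa$ sufficiently small, and pass to $\gamma=0^{+}$ --- your ``alternative, shorter route'' via the intersection $\bigcap_{\gamma>0}S^{\Psi}_{G}(\gamma)$ is precisely the paper's argument. Your additional Foster--Lyapunov discussion of the origin as a positive-measure atom of the countable state space is a refinement the paper does not spell out, but the backbone of the two arguments is identical.
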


\begin{remark}\label{remrand}
It is worth noting the difference between the probability spaces for the uncertain network (Problem \ref{p1}) and the random network (Problem \ref{p2}). In Problem \ref{p1}, the relative phases at time $\mathrm{k}$ evolve in a general space which is generated by independent stochastic processes governing their corresponding exogenous frequencies and couplings. In Problem \ref{p2}, however, the randomness only affects the interconnection topology. As a  result, the relative phases at time $\mathrm{k}$ will transit to a new state within a countable set of states generated by independent Bernoulli processes determining the interconnection topology. 
\end{remark}
\section{Simulation results}\label{sec:sim}
This section presents numerical simulations to validate our theoretical results on the stochastic phase-cohesive behavior of interconnected oscillators for both uncertain (Section \ref{sec:noise}) and random (Section \ref{sec:rand}) networks. We assume that the coupling law obeys $\Psi (\theta)= \sin(\theta)+ 0.3 \sin (3 \theta)$ which meets Assumption \ref{as1}. We also define the arcs $\underline{\Upsilon}=[0,\frac{\pi}{8}], \overline{\Upsilon}=[\frac{\pi}{1.14},\pi]$. Figure \ref{fig:function} shows the plot of function $\Psi(\theta)$ as well as $\Psi_{r}(\theta)$. The latter, $\Psi_{r}(\theta)= 1.5 \sin(1.1 \theta)-0.7 \cos(3.3 \theta- 0.4 \pi)$, is not an odd function on the entire interval $[-\pi,\pi]$. This function is used in simulations designed for the verification of Proposition~\ref{co:sh}.
\begin{figure}[h!]
    \centering
    \includegraphics[scale=0.65]{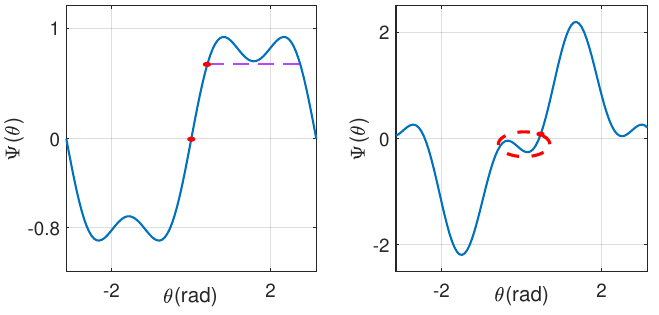}
    \caption{The plot of $\Psi(\theta)$ (left) and $\Psi_{r} (\theta)$ (right). The 'x' axis ranges over $[-\pi,\pi]$. The red dots on $\Psi(\theta)$ show the minimum and maximum possible values for $\gamma$. The red circle on $\Psi_r(\theta)$ shows where the interval in which function is not odd, and the red dot shows $\gamma_c$.}
    \label{fig:function}
\end{figure}
\begin{figure}[h!]
\centering
\includegraphics[scale=0.68]{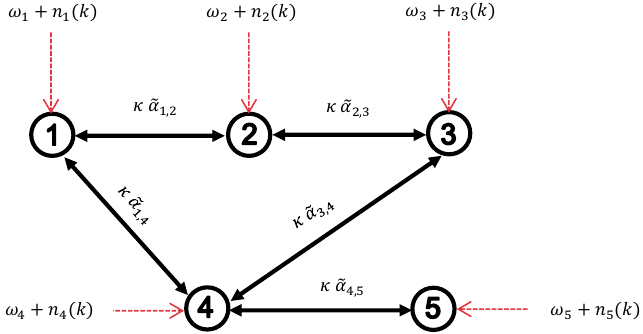}
\caption{Network of 5 oscillators subject to multiplicative and additive uncertainties. Dashed arrows illustrate assignment of exogenous frequencies.}
\label{fig:net}
\end{figure}
\subsection{Uncertain network}
Figure \ref{fig:net} shows a network composed of five oscillators. The constant components of the exogenous frequencies are set to $\omega_1=1, \omega_2=2, \omega_3=3, \omega_4=4, \omega_5=5$. The initial conditions for the oscillators are set to $\boldsymbol \theta(0)=[\frac{\pi}{4},\frac{\pi}{8},\frac{-\pi}{8},\frac{-\pi}{5},\frac{\pi}{5}]$. 
The multiplicative and additive stochastic uncertainties are modeled by Gaussian random variables and reported by ${\tilde\alpha}_{\ell,e}^{(m,v)}=(\text{Mean value}, \text{Variance})$ and ${\varpi}_i^{(m,v)}=(\text{Mean value}, \text{Variance})$. For the first experiment, the mean values and variances are set based on the following table. Calculating the bound for $\kappa$ based on Theorem \ref{th1}, we obtain $\kappa > 39.8$ by replacing ${\boldsymbol E}_{\max} [|\Delta{\tilde{\omega}}|]= 3$, $\mu_{m}-\sqrt{\frac{2 \sigma^2}{\pi}}s=0.3$, $|\Psi(\gamma)|=0.66$, and $\lambda_{\min}(L_{e}({G}_{\tau}))=0.38$. The latter is the minimum eigenvalue of the network's spanning tree which is a line graph obtained by removing the edge $(3,4)$. We set $\kappa= 40$ and $\tau=0.001$ meeting the requirements of Theorem~\ref{th1}. 
\begin{center}
 \begin{tabular}{| l | l |} 
 \hline
 ${\tilde\alpha}_{\ell,e}^{(m,v)}$ & ${\varpi}_i^{(m,v)}$\\ [0.5ex] 
 \hline\hline
 ${\tilde\alpha}_{1,2}^{(m,v)}=(1,0.5)$ & ${\varpi}_{1}^{(m,v)}=(4,1)$\\ 
 \hline
 ${\tilde\alpha}_{2,3}^{(m,v)}=(3,0.5)$ & ${\varpi}_{2}^{(m,v)}=(2,2)$\\
 \hline
 ${\tilde\alpha}_{3,4}^{(m,v)}=(0.85,0.5)$ & ${\varpi}_{3}^{(m,v)}=(0,1)$\\
 \hline
 ${\tilde\alpha}_{1,4}^{(m,v)}=(1.5,0.5)$& ${\varpi}_{4}^{(m,v)}=(1,3)$\\
 \hline
 ${\tilde\alpha}_{4,5}^{(m,v)}=(2,0.5)$ & ${\varpi}_{5}^{(m,v)}=(-2,1.5)$\\ [1ex] 
 \hline
\end{tabular}
\end{center}
The time-evolution of the oscillators' phases, the relative phases, and the maximum relative phase are shown in Figure~\ref{fig:un1}. As shown the relative phases are confined in the desired set.
\begin{figure}[h!]
    \centering
    \includegraphics[scale=0.51]{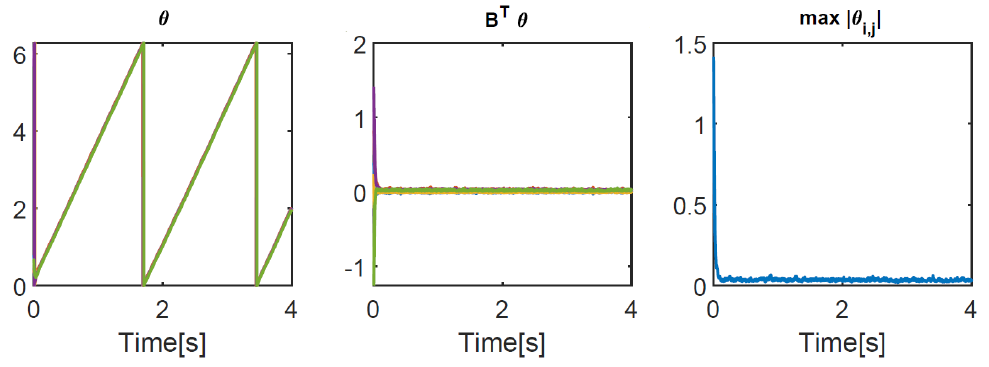}
    \caption{Phases, relative phases and the maximum relative phase: uncertain couplings (positive mean), and exogenous frequencies (Theorem~\ref{th1}.1).}
    \label{fig:un1}
\end{figure} 

In order to examine the effects of multiplicative uncertainties with negative mean values, we keep the settings of the first experiment but replace the mean values of the multiplicative randomness with negative values. The time evolution of relative phases where all multiplicative mean values are set to negative ones are shown in Figure~\ref{fig:un2}. The results confirm the stochastic phase-cohesiveness w.r.t the anti-phase set.
\begin{figure}[h!]
    \centering
    \includegraphics[scale=0.44]{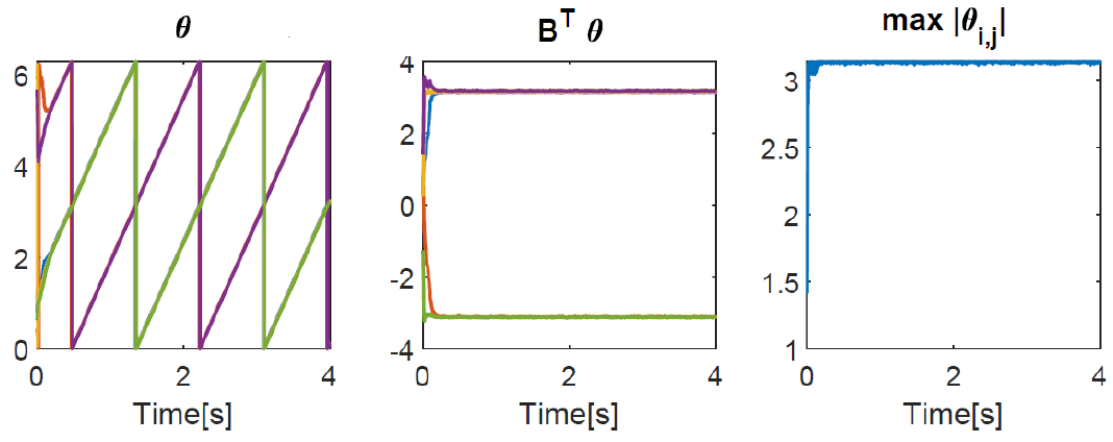}
 \caption{Phases, relative phases and the maximum relative phase: couplings (negative mean values), and uncertain exogenous frequencies (Theorem~\ref{th1}.2).}
    \label{fig:un2}
\end{figure}

To verify Propositions \ref{co:clus} and \ref{co:sh}, we consider a line graph obtained by removing the edge $(3,4)$ of the graph shown in Figure~\ref{fig:net}. We examine the results of Proposition~\ref{co:clus} by setting the size of all mean values of the multiplicative uncertainties equal to one. We set ${\tilde\alpha}_{1,2}^{(m,v)}=(-1,0.5);{\tilde\alpha}_{2,3}^{(m,v)}=(-1,0.5)$ and keep the rest of the mean values positive. We also set ${\boldsymbol E}[{\tilde\omega}_{i}]=0, \forall i$. The coupling coefficient $\kappa>0$ is set to $\kappa=2$. As shown in Figure~\ref{fig:un22}, the relative phases form two clusters. 
\begin{figure}[h!]
    \centering
    \includegraphics[scale=0.44]{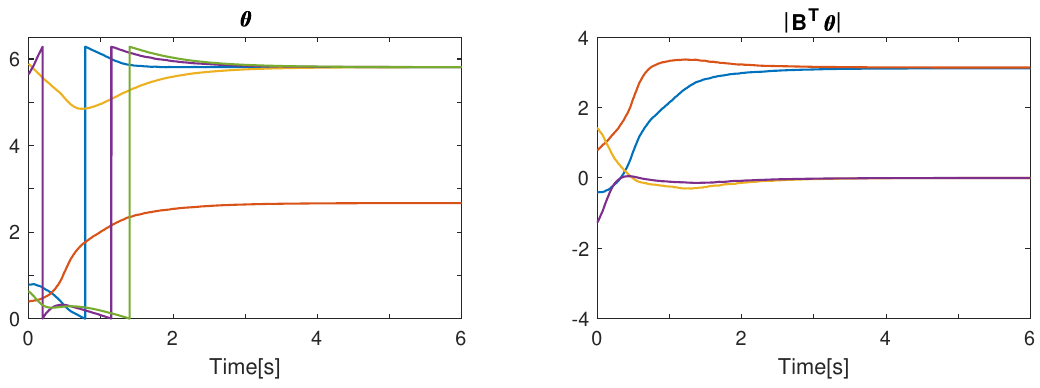}
 \caption{Phases and relative phases over a line network with multiplicative uncertainties with mixed positive and negative mean values (Proposition~\ref{co:clus}).}
    \label{fig:un22}
\end{figure}

To verify the result of Proposition \ref{co:sh}, we use the setting of the first experiment but with a non-odd coupling function $\Psi_{r}(\theta)$. We set $\Psi_{r}(\gamma=0.4\pi)= 2$ and $\bar\Psi=0.2$. We have $\mu_m=1; \mu_M=3; m=4$, which gives $\hat\lambda=0.98$ and $\kappa > 10$.  We set $\kappa=10$. As shown in Figure~\ref{fig:un3}, the relative phases are bounded and the network behavior follows the result of Proposition \ref{co:sh}.
\begin{figure}[h!]
    \centering
    \includegraphics[scale=0.44]{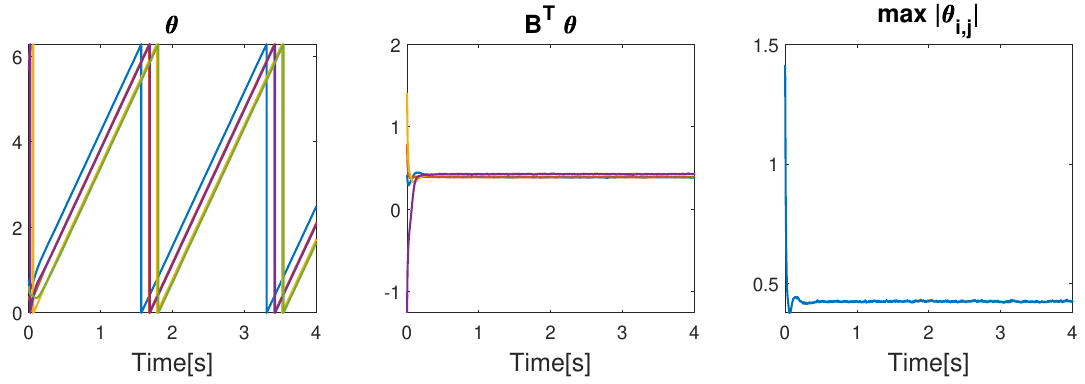}
 \caption{Phases, relative phases, the maximum relative phase: Coupled by $\Psi_{r}(\theta)$, uncertain exogenous frequencies, and couplings with positive mean values (Proposition~\ref{co:sh}).}
    \label{fig:un3}
\end{figure}

\subsection{Random network}
We now present the simulation results for a random network with five oscillators. We use the coupling function $\Psi(\cdot)$, $\underline{\Upsilon}=[0,\frac{\pi}{8}]$, and the initial conditions of the oscillators similar to the first experiment of the previous section. If all links are connected the graph depicted in Figure \ref{fig:net} is obtained, hence the maximal graph. The non-zero and non-identical constant exogenous frequencies equal to $\omega_1=1, \omega_2=2, \omega_3=3, \omega_4=4, \omega_5=5$. Sampling time is set to $0.01 s$ to elaborate the effects of the randomness. We first assume that $p=0.8$. Using the condition in \eqref{eq:kappa-3}, we calculate $\kappa > 15$ and set $\kappa=19$. The evolution of the oscillators' phases, the relative phases and the maximum relative phase over time are shown in Figure \ref{fig:ran1}-(a). As shown, the relative phases are confined in the set $S^{\Psi}_{G}(\gamma=\frac{\pi}{8})$.  

We then decrease the connectivity probability to $0.3$. The results, reported in Figure \ref{fig:ran1}-(b), show that the relative phases are not bounded. We update the value of $\kappa$ with respect to the decrease in the connection probability to $\kappa=30$. The plots in Figure \ref{fig:ran1}-(c) show that the maximum relative phase is now within the desired set.
\begin{figure}[h!]
    \centering
    \includegraphics[scale=0.65]{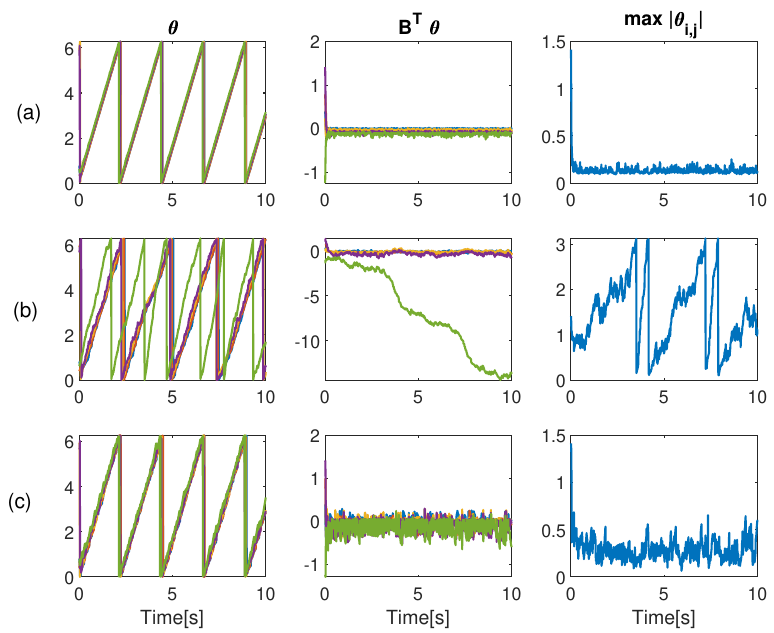}
    \caption{Phases, relative phases and the maximum relative phase over random network: non-identical exogenous frequencies: (a) $p=0.8, \kappa=19$, (b) $p=0.3, \kappa=12$, (c) $p=0.3, \kappa=30$ (Theorem~\ref{th3}).}
    \label{fig:ran1}
\end{figure}
Next, we assume all exogenous frequencies are set to one. We set $\kappa=0.5 >0$, and consider two probabilities of connection: $p=0.8$ and $p=0.1$. Plots in Figure \ref{fig:ran2} show the time-evolution of the phases, the relative phases and the maximum relative phase over time for these two cases. As shown all oscillators' relative phases converge to zero and the rate of convergence is proportional to the probability of connection.
\begin{figure}[h!]
    \centering
    \includegraphics[scale=0.65]{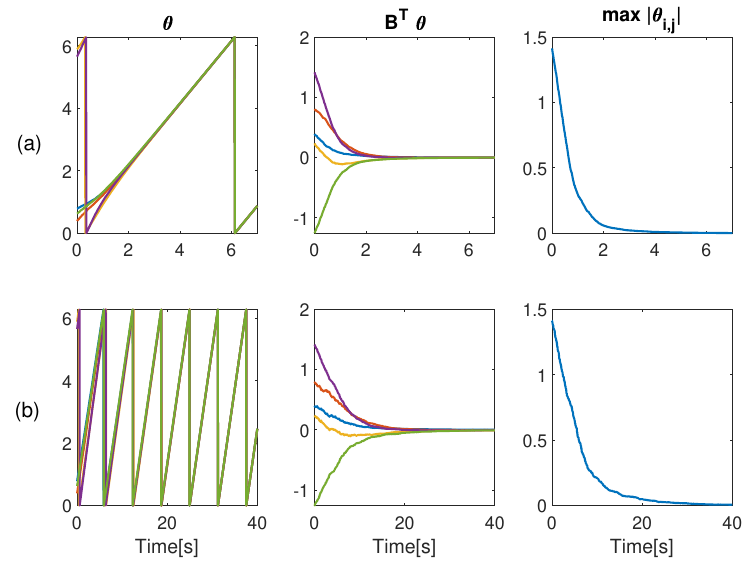}
    \caption{Phases, relative phases and the maximum relative phase over random network: identical exogenous frequencies, $\kappa=0.5$: (a) $p=0.8$, (b) $p=0.1$.}
    \label{fig:ran2}
\end{figure}
\section{Conclusions}\label{sec:cl}
This article studied stochastic relative phase stability for a class of discrete-time coupled oscillators. The two notions of stochastic phase-cohesiveness and ultimate stochastic phase-cohesiveness were introduced. Stochastic phase-cohesiveness of oscillators, with a general class of $2\pi$-periodic, and odd coupling functions, with respect to the two, in-phase and anti-phase, sets were studied. We investigated undirected networks subject to both multiplicative and additive stochastic uncertainties. We proved stochastic phase-cohesiveness with respect to the in-phase set when the mean values of all multiplicative uncertainties were positive, and with respect to the anti-phase set for the case of negative mean values. In addition, we have discussed the relaxation of the odd property of the coupling function by allowing this function to be non-odd on a subset of its domain. Moreover, we proved a clustering behavior for a network with an underlying line topology subject to mixed negative and positive mean values for the multiplicative uncertainties and zero mean value for the additive uncertainties. Further, the stochastic phase-cohesiveness of oscillators with constant exogenous frequencies in an Erd\"{o}s-R\'{e}nyi random network was studied. Sufficient conditions for achieving both stochastic phase-cohesive and phase-locked solutions were derived. It was proved that oscillators with equal exogenous frequencies in a random network with any positive possibility of connection will achieve phase-locking. Our results emphasize the importance of the coupling function in synchronization, discuss the stabilizing effects of the additive stochastic uncertainties, and the effects of multiplicative uncertainties in achieving stochastic phase-cohesiveness. 
\appendices
\section{Proof of Lemma~\ref{mc1}}\label{amc1}
\begin{proof}
The relative phase process $\Theta(\mathrm{k})=B^\top \boldsymbol\theta(\mathrm{k})$ with i.i.d. stochastic variables, generated by continuous distributions, satisfies all properties in Lemma 1 according to the definition of Markov chains, and Definitions \ref{def:markov}-\ref{def:irreducibility}.
\end{proof}
\section{Proof of Lemma~\ref{pr2}}\label{apr2}
\begin{proof}
Based on Proposition 6.2.8. in \cite{meyn2012markov}, the satisfaction of the Feller property (Definition \ref{def:feller1}) together with the non-emptiness of the support of the irreducibility measure leads to the conclusion that every compact set in the state space is also petite. The Feller property of the Markov chain \eqref{eq:rel} can be readily concluded according to the Proposition 6.1.2. in \cite{meyn2012markov}, since the right-side of the Markov chain  \eqref{eq:rel} is a continuous function in $\boldsymbol \theta(\mathrm{k})$ for each fixed pair of i.i.d. realizations $(\tilde{\boldsymbol \omega}(\mathrm{k}),\underaccent{\sim}{\boldsymbol \alpha}(\mathrm{k}))$. In addition, the 
uncertainties are capable of forcing transitions from any subset of the $\sigma$-algebra to an open petite set in the state space, hence, there exists no set in the $\sigma$-algebra in which if the Markov state enters, it always remains there with the absolute probability of one. This concludes that the support of the irreducibility measure has a non-empty interior which completes the proof. 
\end{proof}
\section{Lemma 3: Statement and Proof}\label{ax}
\begin{lemma}\label{lem1}
Consider the random vector 
$$y({\mathrm{k}})= B^\top \boldsymbol\theta(\mathrm{k})-\tau \kappa B^\top B {\underaccent{\sim}{\boldsymbol \alpha}}(\mathrm{k}) \Psi(B^\top \boldsymbol\theta(\mathrm{k})),$$ where $y({\mathrm{k}}) \in \R^{m \times 1}$, and ${\underaccent{\sim}{\boldsymbol \alpha}}(\mathrm{k})$ is a diagonal matrix whose elements are i.i.d Gaussian random variables as in Assumption~\ref{ass2}. Then, the following inequality holds:
$${\boldsymbol E}[|y({\mathrm{k}})| \big| \boldsymbol\theta(\mathrm{k})] \leq |{\boldsymbol E} [y({\mathrm{k}}) \big|\boldsymbol\theta(\mathrm{k})]|+ \kappa \tau \sqrt\frac{2 \sigma^{2}}{\pi} |B^{\top} B \Psi(B^\top \boldsymbol\theta(\mathrm{k}))|.$$
\end{lemma}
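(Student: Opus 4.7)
The plan is to decompose the multiplicative coupling matrix into its mean and a zero-mean Gaussian part, observe that conditional on $\boldsymbol\theta(\mathrm{k})$ the vector $\Psi(B^\top\boldsymbol\theta(\mathrm{k}))$ is deterministic so that $y(\mathrm{k})-\boldsymbol E[y(\mathrm{k})\mid\boldsymbol\theta(\mathrm{k})]$ is a linear image of an i.i.d.\ Gaussian vector, and then combine a coordinatewise triangle inequality with the folded-normal identity $\boldsymbol E|Z|=\sigma\sqrt{2/\pi}$ for $Z\sim N(0,\sigma^2)$ (the same identity invoked in Example~\ref{exam1} and attributed there to Leone et al.).

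Concretely, I would first write $\underaccent{\sim}{\boldsymbol\alpha}(\mathrm{k})=\mathrm{diag}(\boldsymbol\mu)+N(\mathrm{k})$, where $N(\mathrm{k})$ is a diagonal matrix whose entries $n_\ell(\mathrm{k})$ are i.i.d.\ $N(0,\sigma^2)$. Substituting gives
\begin{equation*}
y(\mathrm{k})-\boldsymbol E[y(\mathrm{k})\mid\boldsymbol\theta(\mathrm{k})]=-\tau\kappa\,B^\top B\,N(\mathrm{k})\,\Psi\bigl(B^\top\boldsymbol\theta(\mathrm{k})\bigr).
\end{equation*}
Applying the entrywise triangle inequality $|y_i|\le|\boldsymbol E[y_i\mid\boldsymbol\theta]|+|y_i-\boldsymbol E[y_i\mid\boldsymbol\theta]|$ and taking conditional expectation reduces the problem to bounding, for each $i$,
\begin{equation*}
\tau\kappa\,\boldsymbol E\!\left[\bigl|\textstyle\sum_\ell [B^\top B]_{i\ell}\,n_\ell(\mathrm{k})\,\Psi_\ell\bigr|\,\big|\,\boldsymbol\theta(\mathrm{k})\right].
\end{equation*}
Because the $n_\ell(\mathrm{k})$ are i.i.d.\ zero-mean Gaussians with variance $\sigma^2$ and the $\Psi_\ell$ are conditionally deterministic, this expectation can be controlled by $\tau\kappa\sigma\sqrt{2/\pi}$ times the $i$-th component of $|B^\top B\,\Psi(B^\top\boldsymbol\theta(\mathrm{k}))|$; stacking these coordinatewise estimates into a single vector inequality would give the claimed bound.

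The main obstacle I anticipate is precisely the last estimate: the conditional standard deviation of $[B^\top B\,N(\mathrm{k})\,\Psi]_i$ equals $\sigma\sqrt{\sum_\ell[B^\top B]_{i\ell}^2\Psi_\ell^2}$, which in general differs from $\sigma\,\bigl|\sum_\ell[B^\top B]_{i\ell}\Psi_\ell\bigr|$, so the passage from the sum of independent Gaussians to the stated vector form must be handled carefully. The natural route is to apply the triangle inequality once more inside the absolute value, $\bigl|\sum_\ell[B^\top B]_{i\ell}\,n_\ell\,\Psi_\ell\bigr|\le\sum_\ell |[B^\top B]_{i\ell}|\,|n_\ell|\,|\Psi_\ell|$, evaluate each $\boldsymbol E|n_\ell|=\sigma\sqrt{2/\pi}$ via the folded-normal formula, and interpret $|B^\top B\,\Psi(B^\top\boldsymbol\theta(\mathrm{k}))|$ entrywise as in Example~\ref{exam1}. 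Once this bookkeeping is settled the remaining steps are mechanical, and the resulting bound reduces to the scalar inequality $\boldsymbol E|a|\le|\boldsymbol E a|+2\kappa\tau\sqrt{2\sigma^2/\pi}\,|\Psi(\theta_{1,2})|$ of Example~\ref{exam1} in the two-oscillator case.
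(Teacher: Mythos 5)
Your overall route is the same as the paper's: condition on $\boldsymbol\theta(\mathrm{k})$, observe that each coordinate $y_i(\mathrm{k})$ is Gaussian, invoke the folded-normal identity $\boldsymbol E[|Z|]\leq |\mu|+\sigma\sqrt{2/\pi}$ coordinatewise, and stack. The paper's version is in fact shorter than yours: it simply declares each $y_i$ normal and writes $\boldsymbol E[|y_i(\mathrm{k})|\,|\,\boldsymbol\theta(\mathrm{k})]\leq |\boldsymbol E[y_i(\mathrm{k})\,|\,\boldsymbol\theta(\mathrm{k})]|+\sqrt{2\sigma^2\kappa^2\tau^2x_i^2/\pi}$ with $x=B^\top B\,\Psi(B^\top\boldsymbol\theta(\mathrm{k}))$, i.e., it takes the conditional standard deviation of $y_i$ to be $\tau\kappa\sigma|x_i|$.

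The obstacle you flag at the end is the real crux, and your proposed patch does not close it. The termwise triangle inequality together with $\boldsymbol E[|n_\ell|]=\sigma\sqrt{2/\pi}$ gives the bound $\tau\kappa\sigma\sqrt{2/\pi}\sum_\ell|[B^\top B]_{i\ell}|\,|\Psi_\ell|$, whereas the $i$-th entry of the lemma's right-hand side is $\tau\kappa\sigma\sqrt{2/\pi}\,\bigl|\sum_\ell[B^\top B]_{i\ell}\Psi_\ell\bigr|$ --- the absolute value of the sum, not the sum of absolute values. Since $\bigl|\sum_\ell a_\ell\bigr|\leq\sum_\ell|a_\ell|$, your estimate lands on the wrong side: it proves a weaker inequality that does not imply the stated one whenever the products $[B^\top B]_{i\ell}\Psi_\ell$ have mixed signs. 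For the same reason, the exact conditional standard deviation $\tau\kappa\sigma\bigl(\sum_\ell[B^\top B]_{i\ell}^2\Psi_\ell^2\bigr)^{1/2}$ that you correctly compute is not dominated by $\tau\kappa\sigma|x_i|$ in general, so the paper's own one-line step carries the same unstated hypothesis; the required comparison holds only when the nonzero summands $[B^\top B]_{i\ell}\Psi_\ell$ share a sign (e.g., when a single entry of $\Psi$ is nonzero, which is the regime in which the lemma is actually deployed in the proof of Theorem~1). In short: correct decomposition, correct use of the folded normal, correct identification of the difficulty --- but the final bookkeeping as you describe it yields $|B^\top B|\,|\Psi|$ where the statement requires $|B^\top B\,\Psi|$, and that gap cannot be bridged by the triangle inequality alone; it needs an explicit sign (or single-nonzero-entry) assumption on $[B^\top B]_{i\ell}\Psi_\ell$.
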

\begin{proof}
The proof is based on an application of the folded normal distribution~\cite{leone1961folded}, for calculation of the expectation of the absolute value of a random variable $Z \sim N(\mu,\sigma^2)$. We have,
$${\boldsymbol E}[|Z|]= \sqrt\frac{2 \sigma^2}{\pi} \exp(\frac{-\mu^2}{2 \sigma^2})+ \mu\ erf(\frac{\mu}{\sqrt{2\sigma^2}}),$$
where {\em erf}, the error function~\cite{leone1961folded}, is an odd function such that $|erf(.)| \leq 1$. As a result, we can write,
\be\label{eq:bond} {\boldsymbol E}[|Z|]\leq \sqrt\frac{2 \sigma^2}{\pi}+ |\mu|.\ee
Now, each element of $y({\mathrm{k}})=[y_1({\mathrm{k}}) \ldots y_m({\mathrm{k}})]^\top$ is a random variable obeying the normal distribution. Calculating ${\boldsymbol E}[|y({\mathrm{k}})| \big| \boldsymbol\theta(\mathrm{k})]$, requires computation of the expectation of absolute value of each element of $y({\mathrm{k}})$, \ie, ${\boldsymbol E}[|y_i({\mathrm{k}})| \big| \boldsymbol\theta(\mathrm{k})]$. Denote $B^\top B \Psi(B^\top \boldsymbol\theta(\mathrm{k}))$ by $x=[x_1 \ldots x_m]^\top$. We have,
$${\boldsymbol E}[|y_i({\mathrm{k}})| \big| \boldsymbol\theta(\mathrm{k})] \leq |{\boldsymbol E}[y_i({\mathrm{k}})\big| \boldsymbol\theta(\mathrm{k})]| + \sqrt\frac{2 \sigma^2 \kappa^2 \tau^2 x_i^2}{\pi}.$$
Th proof is completed by stacking all $y_i$ into vector $y({\mathrm{k}})$, and all $x_i$ into $x$.
\end{proof}
\section{Proof of Theorem~\ref{th1}}\label{ath1}
\begin{proof}
The proof is based on an application of Theorem 9.1.8 of \cite{meyn1994} in a network setting. We first assume that ${\boldsymbol E}[{\underaccent{\sim}{\boldsymbol \alpha}}]$ is positive definite and prove that the Markov chain in \eqref{eq:rel} is stochastic phase-cohesive with respect to the in-phase set $S^{\Psi}_{G}(\gamma)$. Let us assume that the chain lives in the set $$U^1= \{\theta_i, \theta_j \in \S^1: |\theta_{i,j}(\mathrm{k})| \in \underline{\Upsilon} \cup {\Upsilon}, \forall (i,j) \in \mathcal{E}\},$$ where $\underline{\Upsilon}$ and ${\Upsilon}$ are defined in \eqref{eq:a1} and \eqref{eq:a2}, respectively.\\[1mm]
Let us first assume that ${\boldsymbol E}[{\underaccent{\sim}{\boldsymbol \alpha}}]$ is positive definite and define the positive and radially unbounded function
\be\label{eq:lya}V(\Theta({\mathrm{k}}))= |\Psi(\gamma)|\!\!\!\!\sum\limits_{|\theta_{i,j}(\mathrm{k})| \in {\Upsilon}}\!\!\!\!|\theta_{i,j}(\mathrm{k})| + \Psi^{o}\!\!\!\!\sum\limits_{|\theta_{i,j}(\mathrm{k})| \in \underline{\Upsilon}}\!\!\!\!|\theta_{i,j}(\mathrm{k})|,\ee
where $\Theta({\mathrm{k}})=B^\top \boldsymbol\theta({\mathrm{k}})$, $\Psi^{o}= \Psi(0^+)$ and $V:\Pi \rightarrow \R^+$, with $m$ the total number of edges of the underlying graph. Notice that in the view of Lemma~\ref{pr2}, the level sets of $V$ are petite. Let us assume that $\left\{\exists (\ell,q)\ \text{s.t.}\quad |\theta_{\ell}(\mathrm{k})-\theta_{q}(\mathrm{k})| \in {\Upsilon}\right\}$. We now derive conditions under which the one-step drift of $V$, \ie,  
\begin{equation}\begin{aligned}\label{eq:dv11}
\Delta V(\Theta)={\boldsymbol E} [V(\Theta({\mathrm{k}+1})) \big|\Theta({\mathrm{k}})]- V(\Theta({\mathrm{k}})),
\end{aligned}\end{equation}
is negative. Notice that the oscillators' phases at time $\mathrm{k}$, \ie, $\boldsymbol \theta(\mathrm{k})$ is known. 

Let us denote the coefficients of $V(\Theta({\mathrm{k}}))$ by $C \in \R^{m \times 1}$. Then, we can write $V(\Theta({\mathrm{k}}))= C^{\top} |B^\top \boldsymbol\theta(\mathrm{k})|$. From the relative phase dynamics in \eqref{eq:rel}, we have
\begin{equation}\begin{aligned}\label{eq:dvc}
&{\boldsymbol E} [V(\Theta({\mathrm{k}+1})) \big|\Theta({\mathrm{k}})]= C^{\top} {\boldsymbol E} [|B^\top \boldsymbol\theta(\mathrm{k}+1)| \big|\boldsymbol\theta({\mathrm{k}})]=\\
&C^{\top} {\boldsymbol E} [|B^\top \boldsymbol\theta(\mathrm{k})-\tau \kappa B^\top B {\underaccent{\sim}{\boldsymbol \alpha}}(\mathrm{k}) \Psi(B^\top \boldsymbol\theta(\mathrm{k}))+ \tau B^{\top} {\tilde{\boldsymbol \omega}}(\mathrm{k})|] \leq \\ &C^{\top} \big({\boldsymbol E}[|\underbrace{B^\top \boldsymbol\theta(\mathrm{k})-\tau \kappa B^\top B {\underaccent{\sim}{\boldsymbol \alpha}}(\mathrm{k}) \Psi(B^\top \boldsymbol\theta(\mathrm{k}))}_{I_1}| + \underbrace{\tau |B^{\top} {\tilde{\boldsymbol \omega}}(\mathrm{k})|}_{I_2}\big),
\end{aligned}\end{equation}
where $I_1=(I_1^1, \ldots, I_1^m)^\top \in \R^{1 \times m}$ and $I_2 \in \R^{m \times 1}$. According to Lemma~\ref{lem1}, we have 
\begin{equation}
{\boldsymbol E}[|I_1| \big|\boldsymbol\theta(\mathrm{k})] \leq |{\boldsymbol E} [I_1 \big|\boldsymbol\theta(\mathrm{k})]|+ \kappa \tau \sqrt\frac{2 \sigma^{2}}{\pi} |B^{\top} B \Psi(B^\top \boldsymbol\theta(\mathrm{k}))|.
\end{equation}
For the clarity of presentation, we denote $V(\Theta({\mathrm{k}}))$ and $\Psi^\top (B^\top \boldsymbol\theta(\mathrm{k}))$ by $V({\mathrm{k}})$ and $\Psi^\top(\mathrm{k})$ respectively. Also, define $r=\kappa \tau \sqrt\frac{2 \sigma^{2}}{\pi}$. Therefore from \eqref{eq:dv11}, we can write
\begin{equation}\begin{aligned}\label{eq:I}
\Delta V (\Theta) \leq C^{\top} \big(|{\boldsymbol E} [I_1]| + r |B^{\top} B \Psi^\top(\mathrm{k})|+ {\boldsymbol E} [I_2]\big) - V({\mathrm{k}}).
\end{aligned}\end{equation} 
Based on the definition of $\Psi^o$, $\Psi^o \leq |\Psi(\theta_{i,j}({\mathrm{k}}))|$ for edges with a non-zero $\Psi(\cdot)$ at each time $\mathrm{k}$. Also, since $\Psi(\cdot)$ is an odd function , and $\Psi(0)=0$, if $\theta_{i,j}({\mathrm{k}})=0$, we can replace $\Psi^{o}$ with $\Psi(\theta_{i,j}({\mathrm{k}}))=0$. Thus, $C \leq |\Psi (B^\top \boldsymbol\theta(\mathrm{k}))|$ holds element-wise, and we can write
\begin{equation}\label{eq:dv12} \Delta V (\Theta) \leq |\Psi^\top(\mathrm{k})| \big(|{\boldsymbol E} [I_1]| + r |B^{\top} B \Psi^\top(\mathrm{k})|+ {\boldsymbol E} [I_2]\big) - V({\mathrm{k}}).\end{equation}
Considering the definition of $V(\Theta({\mathrm{k}}))$ in \eqref{eq:lya}, at each time $\mathrm{k}$ it holds that $V_m(\Theta({\mathrm{k}})) \leq V(\Theta({\mathrm{k}})) \leq V_M(\Theta({\mathrm{k}}))$, where $V_m$ represents the case in which at time $\mathrm{k}$ only one relative phase belongs to $\Upsilon$, and all other relative phases are equal to zero, and $V_M$ is the case where all relative phases at time $\mathrm{k}$ belong to $\Upsilon$. In what follows, we continue the proof by obtaining conditions under which both $V_m$ and $V_M$ decrease at one time-step. Thus, $V$ necessarily decreases.\\[1mm]
{\bf{Proof of $\Delta V_m(\Theta) \leq 0$:}} Given the conditions of this case, only one relative phase at time $\mathrm{k}$ belongs to $\Upsilon$, and all other relative phases are equal to zero, \ie, $|(\theta_{\ell,q} (\mathrm{k}))|\geq \gamma$ and $|\theta_{p,s} (\mathrm{k})|= 0, \forall (p,s) \neq (\ell,q)$. From \eqref{eq:dv12}, all elements of the vector $|\Psi^{\top}(\mathrm{k})||{\boldsymbol E} [I_1]|$ are zero except the element corresponding to $\theta_{\ell,q}$. Therefore, the following equality holds:
$$|\Psi^{\top}(\mathrm{k})| |{\boldsymbol E} [I_1]|= |\Psi^{\top}(\mathrm{k}) {\boldsymbol E} [I_1]|.$$
From \eqref{eq:dvc} and \eqref{eq:dv12}, the on-step drift of $V_m$ obeys
\be\begin{aligned}\label{eq:dv2}
&\Delta V_m (\Theta) \leq \!\!\ |\!\ \! \underbrace{\Psi^\top (\mathrm{k}) B^\top \boldsymbol\theta(\mathrm{k})}_{a} -\underbrace{\tau \kappa\ \Psi^\top (\mathrm{k}) B^\top B\ {\boldsymbol E}[{\underaccent{\sim}{\boldsymbol \alpha}}(\mathrm{k})] \Psi(\mathrm{k})}_{b}|\\&\hspace{-1mm}+{\boldsymbol E}[\underbrace{\ \tau |\Psi^{\top}(\mathrm{k})| |B^{\top} {\tilde{\boldsymbol \omega}}(\mathrm{k})|}_{c}] + \underbrace{r |\Psi^\top(\mathrm{k})||B^{\top} B \Psi^\top(\mathrm{k})|}_{d} - V_{m}(\mathrm{k}).
\end{aligned}\ee
To have $\Delta V_m <0$, the followings should hold
\be\begin{aligned}\label{eq:2con}
&[1]&{\boldsymbol E}[a-b+c]+d < V_{m}(\mathrm{k}) \ \ \ \text{when} \ {\boldsymbol E}[a - b] > 0,\hspace{0.5mm}\\
&[2]&{\boldsymbol E}[-a+b+c]+d < V_{m}(\mathrm{k})\  \text{when} \ {\boldsymbol E}[a - b] < 0.
\end{aligned}\ee
We now discuss that the expectation of the term denoted by $b$ is positive. Assume ${\boldsymbol E}[{\underaccent{\sim}{\boldsymbol \alpha}}]$ is positive definite. Since $B^\top B \geq 0$ holds for a connected graph \cite{mesbahi2010graph}, and also $\Psi(\cdot)$ is an odd function, we obtain
\be\label{eq:I1}{\boldsymbol E}[b] \geq \kappa\ \tau \lambda_{\min}({\boldsymbol E}[{\underaccent{\sim}{\boldsymbol \alpha}}])\;\Psi^{\top}(\mathrm{k})\ B^\top B\ \Psi(\mathrm{k}).\ee 
Notice that inequality \eqref{eq:I1} is greater than zero. The reason is that the range space of $B^\top$ (\ie, ${\cal R} (B^\top)$) and null space of $B$ (\ie, ${\cal N} (B)$) are perpendicular. As a result, $B \Psi(B^\top \boldsymbol\theta(\mathrm{k}))= \mathbf{0}$ holds if and only if $\Psi(B^\top \boldsymbol\theta(\mathrm{k}))= \mathbf{0}$. Since we assumed that at time $\mathrm{k}$, there is at least one edge of the graph whose corresponding relative phase belongs to ${\Upsilon}$, $B^\top \boldsymbol\theta(\mathrm{k}) \neq 0$ holds. Thus, ${\boldsymbol E}[b] >0$. Now, consider the term $d$ in \eqref{eq:dv2}. Since only one element of $\Psi(\mathrm{k})$ is non-zero and $B^\top B \geq 0$, we can write $d=r \Psi^\top(\mathrm{k}) B^{\top} B \Psi^\top(\mathrm{k})$.

Consider the first inequality in \eqref{eq:2con}. Recall that $|(\theta_{\ell,q} (\mathrm{k}))|\geq \gamma$ and $|\theta_{p,s} (\mathrm{k})|= 0, \forall (p,s) \neq (\ell,q)$. This gives $a= \gamma \Psi (\gamma)$. Calculating the term $V_{m}(\mathrm{k})$ under the same condition, we obtain $\gamma |\Psi(\gamma)|$ which cancels out with $a$. We now proceed to characterize conditions which guarantee $\Delta V_m <0$. From \eqref{eq:2con}, two following criteria should hold:
\be\begin{aligned}\label{eq:two}
&\min\ \{{\boldsymbol E}[b]\} - d> \max\ {\boldsymbol E}[c],\\
&-\min\ \{a\}+\max\ \{{\boldsymbol E}[b] + {\boldsymbol E}[c]\} +d < \min\ \{V_{m}(\mathrm{k})\}.
\end{aligned}\ee
To obtain the lower bound of $b$, we write
$B^\top B= R^\top L_{e}({G}_{\tau}) R$ where $L_{e}({G}_{\tau})=B_{\tau}^\top B_{\tau} >0$ is the corresponding edge Laplacian of a spanning tree of graph $G$ at time $\mathrm{k}$, and $R=[I \quad T]$ (Theorem 4.3. of \cite{mesbahi2010graph}, see Section~\ref{sec:pre}). Thus, 
\be\begin{aligned}\label{eq:ie31}
{\boldsymbol E}[b] \geq \tau \kappa \lambda_{\min}({\boldsymbol E}[{\underaccent{\sim}{\boldsymbol \alpha}}]) \lambda_{\min}(L_{e}({\cal G}_{\tau})) \Psi^{\top}(\mathrm{k}) R^T R \Psi(\mathrm{k})>0.
\end{aligned}\ee
Under conditions $|(\theta_{\ell,q} (\mathrm{k}))|= \gamma$ and $|\theta_{p,s} (\mathrm{k})|= 0, \forall (p,s) \neq (\ell,q)$ and assuming that $(\ell,q) \in {\cal G}_{\tau}$, we have $\Psi^{\top}(\mathrm{k}) R^T R \Psi(\mathrm{k})= \Psi^2(\gamma)$.
Let $\mu_m$ denotes $\lambda_{\min}({\boldsymbol E}[{\underaccent{\sim}{\boldsymbol \alpha}}])$. Computing the first inequality in \eqref{eq:two}, gives 
\begin{equation}\label{eq:ie31-a}
(\tau \kappa \mu_m-r) \Psi^{2}(\gamma) \lambda_{\min}(L_{e}({G}_{\tau})) > \tau |\Psi(\gamma)| {\boldsymbol E}_{\max} [|\Delta{\tilde{\omega}}|].
\end{equation}
Considering the second inequality in \eqref{eq:two}, $\max \{b\}$ is obtained if $\Psi(\theta_{\ell,q}(\mathrm{k}))=\Psi_{\max}$. In this case, we have
\be\begin{aligned}\label{eq:ie31-b}
{\boldsymbol E}[b] \leq \tau \kappa \lambda_{\max}({\boldsymbol E}[{\underaccent{\sim}{\boldsymbol \alpha}}]) \lambda_{\max}(L_{e}) \Psi^{\top}(\mathrm{k}) \Psi(\mathrm{k}).
\end{aligned}\ee
Denoting $\lambda_{\max}({\boldsymbol E}[{\underaccent{\sim}{\boldsymbol \alpha}}])$ by $\mu_{M}$, we obtain $$d+{\boldsymbol E}[b] \leq (\tau \kappa \mu_M+r) \lambda_{\max}(L_{e}) \Psi^2_{\max}.$$ Also, calculating $\min\ \{a\}$, where $a$ is defined in \eqref{eq:dv2}, gives $a \geq \Psi_{\max} \gamma$. The reason is that for the edge belonging to $\Upsilon$, the minimum angle is $\gamma$, and its corresponding $\Psi(\cdot)$ value follows same as in calculation of $b$, hence $\Psi_{\max}$. We notice that based on definition of $d$ in \eqref{eq:lya}, for all edges belong to $\Upsilon$, $\min\{d\}= \gamma |\Psi(\gamma)|$.
Hence,
\begin{equation}\label{eq:ie31-c}\begin{aligned} -\gamma+\underbrace{(\tau \kappa \mu_M+r) \lambda_{\max}(L_{e}) \Psi_{\max}+ \tau {\boldsymbol E}_{\max} [|\Delta{\tilde{\omega}}|]}_{f} < \frac{|\Psi(\gamma)|}{\Psi_{\max}} \gamma,\end{aligned}\end{equation}
should hold. Since $|\Psi(\gamma)|<\Psi_{\max}$, we can replace \eqref{eq:ie31-c} with $f <\gamma$. Thus, we obtain the second condition in \eqref{eq:kappa-11}.\\[1mm]
{\bf{Proof of $\Delta V_M(\Theta) \leq 0$:}} Given the conditions of this case, all relative phases at time $\mathrm{k}$ belong to $\Upsilon$, \ie, $\forall (\ell,q): |(\theta_{\ell,q} (\mathrm{k}))|\geq \gamma$.      
In fact, each element of vector ${\boldsymbol E} [I_1]$ in \eqref{eq:I} is positive if either the sign of its corresponding term in vector $\kappa \tau B^\top B\ {\boldsymbol E}[{\underaccent{\sim}{\boldsymbol \alpha}}(\mathrm{k})] \Psi(\mathrm{k})$ \eqref{eq:dvc} is negative or the sign is positive and the size is smaller than $\gamma$. Hence,
$\gamma > \kappa \tau (d_{\max}+1) \mu_M \Psi_{\max}$ should hold, where $d_{\max}$ is the maximum degree of the nodes of the underlying graph. Since $\lambda_{\max}(L_{e})\geq (d_{\max}+1)$, the bound in \eqref{eq:ie31-c} gives a smaller $\tau$, hence, it satisfies the required condition. Thus, 
$|\Psi^{\top}(\mathrm{k})| |{\boldsymbol E} [I_1]|= \Psi^{\top}(\mathrm{k}) {\boldsymbol E} [I_1].$ 
We now write $\Delta V_M$ and argue similar to the case of $V_m$ which gives \eqref{eq:kappa-11}.\\[1mm]
The sufficient coupling condition $\kappa$ in the above result depends on $|\Psi(\gamma)|$. For relative phases belonging to ${\Upsilon}$, $|\Psi(\gamma)|$ is the minimum of $\Psi(\gamma_i), \gamma_i \in {\Upsilon}$. That is, the obtained condition guarantees that the chain will return to $S^{\Psi}_{G}(\gamma)$ from $\Upsilon$. We notice that even if the chain initiates from $S^{\Psi}_{G}(\gamma)$, the presence of stochastic uncertainties can transfer the relative phase of each two oscillators from $S^{\Psi}_{G}(\gamma)$ to not only ${\Upsilon}$ but also $\overline{\Upsilon}$. Considering returning of the relative phases from $\overline{\Upsilon}$ to $S^{\Psi}_{G}(\gamma)$, we shall replace $|\Psi(\gamma)|$ in \eqref{eq:kappa-11} with $\Psi(\gamma_{\min})=\min \Psi(\gamma_i), \gamma_i \in \overline{\Upsilon}$. This substitution leads to $\kappa \rightarrow \infty$ (see Example $1$). To verify whether the arc set $\overline{\Upsilon}$ is absorbing, we shall study the evolution of the relative phase after exiting $\overline{\Upsilon}$. 
We can prove that if ${\boldsymbol E}[{\underaccent{\sim}{\boldsymbol \alpha}}]> 0$, the chain is transient with respect to $S^{\Psi}_{G}(\gamma_{\max})$. Recall that similar to $\underline{\Upsilon}$, the maximum value of $\Psi(\cdot)$ for the arcs in $\overline{\Upsilon}$ is $|\Psi(\gamma)|$. Define, 
$${\bar V}(\Theta({\mathrm{k}}))= C^{\top} (\pi {\mathbf 1}_{m}-|B^\top \boldsymbol\theta(\mathrm{k})|).$$
Computing the one-step drift, we obtain
\be \Delta {\bar V}(\Theta)= -\Delta V(\Theta).\ee
Therefore, assuming ${\boldsymbol E}[{\underaccent{\sim}{\boldsymbol \alpha}}]$ is positive definite, the lower bound for $\kappa$ in \eqref{eq:kappa-11}, which guarantees $\Delta V <0$, leads to $\Delta {\bar V} >0$. Based on Theorem 8.0.2 of \cite{meyn2012markov}, the chain is transient on $S^{\Psi}_{G}(\gamma_{\max})$. That is, the probability that the chain revisits $S^{\Psi}_{G}(\gamma_{\max})$ infinitely often is zero. From the analyses using both $V$ and $\bar V$, the conditions in \eqref{eq:kappa-11} guarantee that the chain is stochastic phase-cohesive with respect to $S^{\Psi}_{G}(\gamma)$.

Now, assume that ${\boldsymbol E}[{\underaccent{\sim}{\boldsymbol \alpha}}]$ is negative definite, and the underlying deterministic topology satisfies $B^{\top} \boldsymbol\theta \geq \gamma_{\max} {\mathbf 1}_{m}\in {\cal R} (B^\top)$. The latter assumption is imposed dealing with the existence of graph cycles (see \cite{mesbahi2010graph}) composed of an odd number of oscillators. In such a case, independent of the stochastic nature of our problem setting, it is not feasible to have all relative phases greater than some predefined limits, e.g. if $\gamma_{\max} >\frac{\pi}{2}$. Hence, at least one of the relative phases should be confined to the arc $\underline{\Upsilon}$ by the topological restrictions. Here, we exempt the latter case. To have $\Delta {\bar V} (\Theta) <0$, $C^\top |B^\top \boldsymbol\theta(\mathrm{k})|\geq C^\top {\boldsymbol E}[|B^\top \boldsymbol\theta(\mathrm{k}+1)|]$ should hold. Since ${\boldsymbol E}[|B^\top \boldsymbol\theta(\mathrm{k}+1)|] \geq |{\boldsymbol E}[B^\top \boldsymbol\theta(\mathrm{k}+1)]|$ element-wise, therefore we should have 
\be\label{eq:barv} C^\top |B^\top \boldsymbol\theta(\mathrm{k})|\geq C^\top |{\boldsymbol E}[B^\top \boldsymbol\theta(\mathrm{k}+1)]|.\ee
Similar to the previous case, we study two cases of one edge, $\bar V_m$, and all edges, $\bar V_M$, belonging to $\Upsilon$. Considering the case of $\bar V_{m}$, we assume $|(\theta_{\ell,q} (\mathrm{k}))| \leq \gamma_{\max}$ and $|\theta_{p,s} (\mathrm{k})|= \pi, \forall (p,s) \neq (\ell,q)$. Since $\Psi(\pi)=0$, we can replace $C$ with $|\Psi^\top (\mathrm{k})|$ in both sides of \eqref{eq:barv}. As a result, the following should hold:  
\be\begin{aligned}\label{eq:dvv2}
&|\!\ \!\underbrace{\Psi^\top (\mathrm{k}) B^\top \boldsymbol\theta(\mathrm{k})}_{a} +\underbrace{\tau \kappa\ \Psi^\top (\mathrm{k}) B^\top B |{\boldsymbol E}[{\underaccent{\sim}{\boldsymbol \alpha}}(\mathrm{k})]| \Psi(\mathrm{k})}_{b}\\&\hspace{-1mm}+\underbrace{\ \tau \Psi^{\top}(\mathrm{k}) {\boldsymbol E}[B^{\top} {\tilde{\boldsymbol \omega}}(\mathrm{k})]}_{c}| \geq \Psi^\top (\mathrm{k}) B^\top \boldsymbol\theta(\mathrm{k}).
\end{aligned}\ee
Since $\Psi(\cdot)$ is an odd function, $b>0$ holds (as discussed above). Then, if $a+b\pm |c|>0$, to have a negative drift, $\min \{b\} > \max \{|c|\}$ should hold, which gives the condition on $\kappa$ as in \eqref{eq:kappa-12}. If the latter condition holds, $a+b \pm |c|$ is always positive. However, the relative phase at each time should be smaller than $\pi$ (definition of geodesic distance), thus $\max\{b+c\} < \Psi_{\max} (\pi-\gamma_{\max})$ should hold, which completes the proof.
\end{proof}
\section{Proof of Corollary~\ref{th2}}\label{ath2}
\begin{proof}
The proof is based on Theorem 11.0.1 of \cite{meyn2012markov}. We derive conditions under which the following inequality holds
\begin{equation}\label{eq:drift2}
\boldsymbol E[V(\Phi_{k+1})|\Phi_k=x]-V(x)< -1, \quad \;\forall \Phi_k \in \Pi \setminus S^{\Psi}_{G}(\gamma).
\end{equation}
The rest of the proof is similar to the proof of Theorem~\ref{th1}.
\end{proof}
\section{Proof of Proposition\ref{co:clus}}\label{aclus}
\begin{proof}
Take ${\bar V}={\underaccent{i,j}{\max}} |\Psi(\theta_{i,j})|$ and assume that at time $\mathrm{k}$, $\{\exists (\ell,e) \in \E: |\theta_{\ell,e}| \in \Upsilon, {\bar V}(\mathrm{k})=|\Psi(\theta_{\ell,e})|\}$.  
To prove the stochastic phase-cohesiveness w.r.t. $U^{\Psi}_{G}(\gamma)$, we use a drift-based argument to derive conditions under which $\Delta {\bar V}=\boldsymbol E[{\bar V}(\mathrm{k}+1)]-|\Psi(\theta_{\ell,e})(\mathrm{k})|$ is negative. Recall that from the definition of the desired sets in Section~\ref{sec:set}, $\Psi(\cdot)$ for all arcs that belong to $\Upsilon$ is larger than arcs in $\underline{\Upsilon}$ or $\overline{\Upsilon}$. So, instead of computing $\Psi(\theta_{\ell,e} (\mathrm{k}+1))$, we use an equivalent argument. We compute the evolution of edge (relative phase) $\theta_{\ell,e}$, that is $\Delta \theta_{\ell,e}={\boldsymbol E}[|\theta_{\ell,e}(\mathrm{k}+1)|]-|\theta_{\ell,e}(\mathrm{k})|$. We prove that an edge for which $\sign({\boldsymbol E}[{\tilde\alpha}_{\ell,e}])>0$ holds is recurrent to $\underline{\Upsilon}$ while if $\sign({\boldsymbol E}[{\tilde\alpha}_{\ell,e}])<0$ holds, it is recurrent to $\overline{\Upsilon}$. In both cases, $\Delta {\bar V} <0$ is guaranteed.   
Assume $\theta_{\ell,e}(\mathrm{k}) >0$. First consider the case ${\boldsymbol E}[|\theta_{\ell,e}(\mathrm{k}+1)|]={\boldsymbol E}[\theta_{\ell,e}(\mathrm{k}+1)]$. Since the underlying topology is a line graph, based on~\eqref{eq:p11}, we have 
\be\begin{aligned}\label{eq:mixm}
\Delta &\theta_{\ell,e}\!=\!{\boldsymbol E}[{\theta_{\ell,e}}(\mathrm{k}+1)]-{\theta_{\ell,e}}(\mathrm{k})=
\\&- 2 \ \tau\  \kappa\ \sign({\boldsymbol E}[{\tilde\alpha}_{\ell,e}(\mathrm{k})]) |{\boldsymbol E}[{\tilde\alpha}_{\ell,e}(\mathrm{k})]| \Psi(\theta_{\ell,e}(\mathrm{k}))
\\&- \tau\ \kappa\ \sign({\boldsymbol E}[{\tilde\alpha}_{p,q}(\mathrm{k})]) |{\boldsymbol E}[{\tilde\alpha}_{p,q}(\mathrm{k})]| \Psi(\theta_{p,q}(\mathrm{k}))
\\&- \tau\ \kappa\ \sign({\boldsymbol E}[{\tilde\alpha}_{d,s}(\mathrm{k})]) |{\boldsymbol E}[{\tilde\alpha}_{d,s}(\mathrm{k})]| \Psi(\theta_{d,s}(\mathrm{k})),\end{aligned}\ee
where $(p,q)$ and $(d,s)$ denote the neighboring edges. Our aim is to prove that $\sign({\boldsymbol E}[{\tilde\alpha}_{\ell,e}(\mathrm{k})]) \Delta \theta_{\ell,e} <0$. Recall that $|\Psi(\theta_{\ell,e}(\mathrm{k}))| \geq \Psi(\gamma)$, where $|\Psi(\theta_{\ell,e}(\mathrm{k}))|$ is the maximum. That is, $|\Psi(\theta_{d,s}(\mathrm{k}))| \leq |\Psi(\theta_{\ell,e}(\mathrm{k}))|$ and $|\Psi(\theta_{p,q}(\mathrm{k}))| \leq |\Psi(\theta_{\ell,e}(\mathrm{k}))|$. Since the size of all mean-values are equal to $\lambda$ and $\Psi(\theta_{\ell,e}(\mathrm{k}))$ is maximum, the sign of $\Delta \theta_{\ell,e}$ is always equal to $-\sign({\boldsymbol E}[{\tilde\alpha}_{\ell,e}(\mathrm{k})])$ unless $|\Delta \theta_{\ell,e}|=0$. In fact, depending on the sign and size of $\Psi(\theta_{p,q}(\mathrm{k}))$ and $\Psi(\theta_{d,s}(\mathrm{k}))$, we have
$$0 \leq |\Delta \theta_{\ell,e}| \leq 4 \tau \kappa \lambda \Psi_{\max}.$$
Let us look at the case of $|\Delta \theta_{\ell,e}|=0$. In this case, the $\Psi(\cdot)$ of the neighboring edges are equal to $\Psi(\theta_{\ell,e}(\mathrm{k}))$. We consider the dynamics of either of them and write their evolution similar to \eqref{eq:mixm}. Now, either the same situation occurs or the maximum relative phase is dominant w.r.t. its neighbors. In case of a zero difference, we continue with a neighboring edge and repeat this process till reaching the tail of the graph or if we find a maximum edge whose neighbors (at least one neighbor) do not possess a corresponding maximum $\Psi(\cdot)$. Since the graph is a line graph, the final edge of the graph has only one neighbor. Then we can write, 
\be\begin{aligned}
\Delta &\theta_{\ell,e}\!=\!{\boldsymbol E}[{\theta_{\ell,e}}(\mathrm{k}+1)]-{\theta_{\ell,e}}(\mathrm{k})=
\\&- 2 \ \tau\  \kappa\ \sign({\boldsymbol E}[{\tilde\alpha}_{\ell,e}(\mathrm{k})]) |{\boldsymbol E}[{\tilde\alpha}_{\ell,e}(\mathrm{k})]| \Psi(\theta_{\ell,e}(\mathrm{k}))
\\&- \tau\ \kappa\ \sign({\boldsymbol E}[{\tilde\alpha}_{p,q}(\mathrm{k})]) |{\boldsymbol E}[{\tilde\alpha}_{p,q}(\mathrm{k})]| \Psi(\theta_{p,q}(\mathrm{k})).\end{aligned}\ee
From the above we conclude that for $\kappa >0$, $|\Delta \theta_{\ell,e}| \geq \tau \kappa \lambda \Psi_{\gamma} \neq 0$ and $\sign(\Delta \theta_{\ell,e})= - \sign({\boldsymbol E}[{\tilde\alpha}_{\ell,e}])$. This gives $\Delta {\bar V} <0$. Now, consider the case of ${\boldsymbol E}[|\theta_{\ell,e}(\mathrm{k}+1)|]= -{\boldsymbol E}[\theta_{\ell,e}(\mathrm{k}+1)]$. Writing the inequality in \eqref{eq:mixm} for the maximum case, we obtain the condition on the sampling time, \ie, $-\gamma +4 \tau \kappa \lambda \Psi_{\max} < \gamma$ which completes the proof. 
\end{proof}
\section{Proof of Proposition~\ref{co:sh}}\label{ash}
\begin{proof}
Similar to the proof of Theorem~\ref{th1}, define
\begin{equation*}\begin{aligned}
V(\Theta({\mathrm{k}}))&=|\Psi_{r}(\gamma)| \sum\limits_{|\theta_{i,j}(\mathrm{k})| \in {\Upsilon}} \mu_{i,j}\; |\theta_{i,j}(\mathrm{k})| + \\ 
&\Psi^{o} \sum\limits_{|\theta_{i,j}(\mathrm{k})| \in \underline{\Upsilon}^{1} \cup \underline{\Upsilon}^{2}} \mu_{i,j}\; |\theta_{i,j}(\mathrm{k})|
\end{aligned}\end{equation*}
where $\mu_{i,j}={\boldsymbol E}[{\tilde\alpha}_{i,j}]$. Assume that at time $\mathrm{k}$, there exists an edge $\theta_{\ell,e}$ such that $|\theta_{\ell,e}| \in \Upsilon$. To characterize the lower bound on $\kappa$, we assume the worst condition, \ie,  $|(\theta_{\ell,q} (\mathrm{k}))|= \gamma$ and $|\theta_{p,s} (\mathrm{k})| \leq \gamma_c, \forall (p,s) \neq (\ell,q)$ and $\Psi(\theta_{p,s} (\mathrm{k}))=-\bar\Psi$. Similar to the proof of Theorem~\ref{th1}, it holds $V(\Theta({\mathrm{k}+1})) \leq {\bar V}(\Theta({\mathrm{k}+1}))$, where 
$${\bar V}(\Theta({\mathrm{k}+1}))=|\Psi_{r,\mu}^\top (B^\top \boldsymbol\theta(\mathrm{k}))| |B^\top \boldsymbol\theta(\mathrm{k}+1)|,$$
with $\Psi_{r,\mu}(B^\top \boldsymbol\theta(\mathrm{k}))= \mu \Psi_{r}(B^\top \boldsymbol\theta(\mathrm{k}))$, and $\mu_{m \times m}$ is a constant matrix equal to $\mu= {\boldsymbol E}[{\underaccent{\sim}{\boldsymbol \alpha}}]$. Different from Theorem~\ref{th1}, there exists an arc in $[0,\pi]$ on which $\Psi_{r}(\cdot)$ may take positive or negative values. Hence, $\Psi_{r,\mu}^\top (B^\top \boldsymbol\theta(\mathrm{k})) B^\top \boldsymbol\theta(\mathrm{k}) \geq 0$ does not necessarily hold. Denote ${\bar V}(\Theta({\mathrm{k}}))$, $\Psi_{r,\mu}^\top (B^\top \boldsymbol\theta(\mathrm{k}))$ and $\Psi_{r} (B^\top \boldsymbol\theta(\mathrm{k}))$ by ${\bar V}({\mathrm{k}})$, $\Psi_{r,\mu}^\top (\mathrm{k})$ and $\Psi_{r}(\mathrm{k})$, respectively. Since, the variances are assumed small, we have
\begin{equation}\begin{aligned}\label{eq:coin}
&\hspace{-40mm}\Delta {\bar V} (\Theta) \leq - {\bar V}(\mathrm{k})+{\boldsymbol E} [\underbrace{\ \tau |\Psi_{r,\mu}^{\top}(\mathrm{k})|\;|B^{\top} {\tilde{\boldsymbol \omega}}(\mathrm{k})}_{c}| \;\big|{\boldsymbol\theta}(\mathrm{k})]+ \hspace{-10mm} \\ \!\!\ \! {\boldsymbol E} [\underbrace{|\Psi_{r,\mu}^\top (\mathrm{k})| B^\top \boldsymbol\theta(\mathrm{k})}_{a} -\tau \kappa\ &\underbrace{|\Psi_{r,\mu}^\top (\mathrm{k})| B^\top B {\underaccent{\sim}{\boldsymbol \alpha}}(\mathrm{k}) \Psi_{r}(\mathrm{k})}_{b} \;\big|{\boldsymbol\theta}(\mathrm{k})].
\end{aligned}\end{equation}
Similar to the proof of Theorem \ref{th1}, to have $\Delta {\bar V} <0$, the inequalities in \eqref{eq:2con} should hold. Now, we continue by  characterizing the lower bound on $\kappa$, we assume the worst condition, \ie,  $|(\theta_{\ell,q} (\mathrm{k}))|= \gamma$ and $|\theta_{p,s}(\mathrm{k})| < \gamma_c, \forall (p,s) \neq (\ell,q)$ and $\Psi(\theta_{p,s} (\mathrm{k}))=-\bar\Psi$. Considering the worst condition, assume that
${\boldsymbol E}[{\tilde\alpha}_{\ell,q}]= \lambda_{\min}({\boldsymbol E}[{\underaccent{\sim}{\boldsymbol \alpha}}])$ and ${\boldsymbol E}[{\tilde\alpha}_{p,s}]= \lambda_{\max}({\boldsymbol E}[{\underaccent{\sim}{\boldsymbol \alpha}}])$. From the definition of ${\bar V}(\boldsymbol\theta({\mathrm{k}}))$, we obtain $a={\bar V}(\mathrm k)$. Define, 
$$x=\lambda_{\min}({\boldsymbol E}[{\underaccent{\sim}{\boldsymbol \alpha}}])[|\Psi_{r}(\gamma)|\  0 \ldots 0]^\top,$$ 
$$y=\lambda_{\max}({\boldsymbol E}[{\underaccent{\sim}{\boldsymbol \alpha}}])[0 \ \bar\Psi\ \bar\Psi \ldots \bar\Psi]^\top.$$
Then, we can write $|\Psi_{r,\mu}(\mathrm{k})|= x+y$ and ${\boldsymbol E}[{\underaccent{\sim}{\boldsymbol \alpha}}(\mathrm{k})] \Psi_{r}(\mathrm{k})= x-y$. Thus, computing the term $b$ in \eqref{eq:coin} gives
$${\boldsymbol E}[b] = (x^\top+y^\top) B^\top B (x-y)= x^\top B^\top B x- y^\top B^\top B y.$$ 
Notice that $B^\top B$ is symmetric, and for a tree graph $L_e= B^\top B$ is positive definite. Define ${\hat\lambda}_1=\lambda_{\min}({\boldsymbol E}[{\underaccent{\sim}{\boldsymbol \alpha}}]) |\Psi_{r}(\gamma)|$ and ${\hat\lambda}_2=\lambda_{\max}({\boldsymbol E}[{\underaccent{\sim}{\boldsymbol \alpha}}]) \bar{\Psi}$. Assume that 
$${\hat\lambda}={\hat\lambda}_1-{\hat\lambda}_2 \sqrt{(m-1)}> 0.$$
Denote $f={\hat\lambda}_1+{\hat\lambda}_2 \sqrt{(m-1)}>0$. As a result, ${\boldsymbol E}[b] \geq \lambda_{\min}(L_{e}) {\hat\lambda} f >0.$ To obtain $\kappa$, we should have $\kappa \min\ \{{\boldsymbol E}[b]\} > \max\ {\boldsymbol E}[c]$. This gives,
$$\kappa > \frac{\big({\hat\lambda}_1+{\hat\lambda}_2 (m-1)\big) {\boldsymbol E}_{\max} [|\Delta{\tilde{\omega}}|]}{\lambda_{\min}(L_{e}) {\hat\lambda} \big({\hat\lambda}_1+{\hat\lambda}_2 \sqrt{(m-1)}\big)}.$$
To obtain the bound on $\tau$, we assume that all edges of the network belong to $\Upsilon$. The rest of the proof follows from the proof of Theorem~\ref{th1} which leads to the conditions in \eqref{eq:kappa-sh}.
\end{proof}
\section{Proof of Lemma~\ref{pr3}}\label{apr3}
\begin{proof}
The network topology is a random graph, therefore, at each time step $\mathrm{k}$, the set of randomly established edges determines the space wherein the relative phases evolve until $\mathrm{k+1}$. Thus, the probability space is the countable set of spaces that are randomly selected by multiple independent Bernoulli processes. The probability space 
may contain at least one space (correspond to the null graph) and at most $2^m$ spaces (corresponding to the maximal graph). Since $p$ is a fixed non-zero probability and the topology of the network at a time $\mathrm{k}$ is independent of the states $\theta_i(\mathrm{t})$, $\mathrm{t}<\mathrm{k}$, the dynamics $\theta_i(\mathrm{k+1})$ only depends on $\theta_i(\mathrm{k})$, and the random topology determined by $p$ and hence \eqref{eq:rand1} is a Markov chain. Due to the independence of the Bernoulli processes over time and also $p$ being a non-zero probability, the countable set of probability spaces is $\psi$-irreducible (Definition \ref{def:irreducibility}, also \cite[Ch.4]{meyn2012markov}). 
\end{proof}
\section{Proof of Theorem~\ref{th3}}\label{ath3}
\begin{proof}
The proof is based on Theorem 11.0.1 of \cite{meyn2012markov} for chains evolving in a countable space. Notice that here, all compact sets are petite since the space is countable. The trend of the proof is similar to the proof of Theorem \ref{th1}. Define function $V(\Theta({\mathrm{k}}))$ as in \eqref{eq:lya}. Different from Theorem \ref{th1}, here, the randomness are governed by Bernoulli distribution and we can compute the finite set of all possible outcomes. In order to compute the one-step drift, for $V_m$ and $V_M$ (as defined in the proof of Theorem \ref{th1}), we need to calculate
${\boldsymbol E}[|I_1|]$, with $I_1=B^\top \boldsymbol\theta(\mathrm{k})-\tau \kappa B^\top B \beta(\mathrm{k}) \Psi(B^\top \boldsymbol\theta(\mathrm{k})).$ Since each two oscillators, if connected, are linked un-directedly, we can write $B$ such that $B^\top \boldsymbol\theta({\mathrm{k}})\succ 0$ holds. Recall that based on the definition of $V_m$, only one edge is non-zero, and for $V_M$, all are non-zero. Now, consider $I_1$, also in the view of dynamics in \eqref{eq:relran}. If $\tau$ is sufficiently small, for a given $\kappa$, we can assure that all elements of $I_1$ are positive, hence, ${\boldsymbol E}[|I_1|]={\boldsymbol E}[I_1]$ holds. To characterize the condition on $\tau$, we write
$$\gamma=\min{|\theta_{i,j}(\mathrm{k})|} \geq \tau \kappa \ d_{\max} \Psi_{\max},$$
where $d_{\max}$ is the maximum degree, and $d_{\max} \leq \lambda_{\max}(L_e)$. This gives the bound on $\tau$. Then, based on a similar argument as the proof of Theorem \ref{th1}, we show that the one-step drift of $V$ from ${\Upsilon}$ in~\eqref{eq:a2} to $\underline{\Upsilon}$ in~\eqref{eq:a1} is negative if $\kappa$ is sufficiently large, as in \eqref{eq:kappa-3}. Furthermore, since $\forall i,j, \omega_{i,j} \neq 0$ and the oscillators are randomly connected, if the maximum relative phase enters $\overline{\Upsilon}$, the probability that it exits this set is non-zero. In a similar fashion to Example \ref{exam1} and Theorem \ref{th1}, we argue that $S^{\Psi}_{G}(\gamma_{\max})$ is transient which ends the proof.
\end{proof}
\section{Proof of Corollary~\ref{th4}}\label{ath4}
\begin{proof}
The proof follows a similar trend as of Theorem \ref{th3}. Set $\gamma=0^+$ which gives $S^{\Psi}_{G}(0)=\{\theta_i, \theta_j \in \S^1:|\theta_{i,j}(\mathrm{k})|=\gamma, \forall (i,j) \in \mathcal{E}\}$. 
From the result of Theorem~\ref{th3}, we have 
\be\label{eq:last} \kappa\ \tau\ |\Psi(\gamma)|\ p\ \lambda_{\min}(L_e) > |\Delta_{\max}{\omega}|.\ee
Notice that the above holds for any $\gamma \in [0^+,\pi^-]$. By substituting $|\Delta_{\max}{\omega}|=0$, we conclude that the condition is satisfied $\forall \kappa >0$. From the proof of Theorem~\ref{th3}, we have $\tau \sim \frac{\gamma}{\kappa}$, and hence sufficiently small. This indicates that if the chain exits the origin, it revisits the origin with probability one. Notice that this does not hold for the anti-phase arc $\overline{\Upsilon}={\pi}$. Since, exiting this arc, the relative phases will return to the origin as explained above. This completes the proof.
\end{proof} 
\bibliographystyle{ieeetr}
\bibliography{biblio}

\end{document}